\documentclass[%
nofootinbib,
superscriptaddress,
 preprint,
showpacs,preprintnumbers,
 amsmath,amssymb,
 aps,
 pra,
 longbibliography,
 floatfix,
 lengthcheck,%
]{revtex4-1}

\newcommand{\APPREF}[1]{Appendix~\ref{#1}}

\usepackage{generalized_quantum_hydrodynamics}

\newtheorem{theorem}{Theorem}
\newtheorem{lemma}{Lemma} 

\usepackage{xcolor}
\hypersetup{
	colorlinks=true,
	linkcolor=blue!50!red,
	urlcolor=red!70!black
} 

\captionsetup{belowskip=0pt}
\captionsetup{aboveskip=3pt}
\setlength{\intextsep}{0pt plus 2pt minus 2pt}


\begin{document}
\let\oldfootnote\footnote
\renewcommand{\footnote}[1]{\,\oldfootnote{#1}}

\title{Quantum and semiclassical dynamics as fluid theories where gauge matters}

\author{Dmitry V. Zhdanov}
\email{dm.zhdanov@gmail.com}
\affiliation{
Institute of Spectroscopy of the Russian Academy of Sciences, Moscow, 142190, Russia
}
\author{Denys I. Bondar}
\affiliation{
Tulane University, New Orleans, LA 70118, USA
}
\begin{abstract}
The family of trajectories-based approximations employed in computational quantum physics and chemistry is very diverse. For instance, Bohmian and Heller's frozen Gaussian semiclassical trajectories seem to have nothing in common. Based on a hydrodynamic analogy to quantum mechanics, we furnish the unified gauge theory of all such models. In the light of this theory, currently known methods are just a tip of the iceberg, and there exists an infinite family of yet unexplored trajectory-based approaches. Specifically, we show that each definition for a semiclassical trajectory corresponds to a specific hydrodynamic analogy, where a quantum system is mapped to an effective probability fluid in the phase space. We derive the continuity equation for the effective fluid representing dynamics of an arbitrary open bosonic many-body system. We show that unlike in conventional fluid, the flux of the effective fluid is defined up to Skodje’s gauge [R. T. Skodje \textit{et. al.} Phys. Rev. A \textbf{40}, 2894 (1989)]. We prove that the Wigner, Husimi and Bohmian representations of quantum mechanics are particular cases of our generic hydrodynamic analogy, and all the differences among them reduce to the gauge choice. Infinitely many gauges are possible, each leading to a distinct quantum hydrodynamic analogy and a definition for semiclassical trajectories. We propose a scheme for identifying practically useful gauges and apply it to improve a semiclassical initial value representation employed in quantum many-body simulations.
\end{abstract}
\maketitle


\section{Introduction}
\begin{figure}[bp]
\vspace{-10pt}
\captionsetup[subfloat]{position=top}

\def\spheight{0.175\textwidth}
\subfloat[``Default'' gauge $\GPH{=}0$ ($\wW_{p}{=}\wW_{x}{=}1/\sqrt 2$)
\label{@FIG:intro(Husimi)}
]{
\includegraphics[height=\spheight]{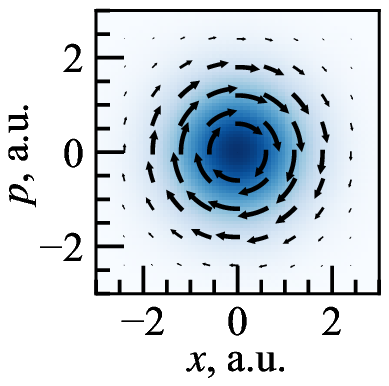}
}
\subfloat[Regularized Bohmian gauge
\label{@FIG:intro(RBG)}
]{\includegraphics[height=\spheight]{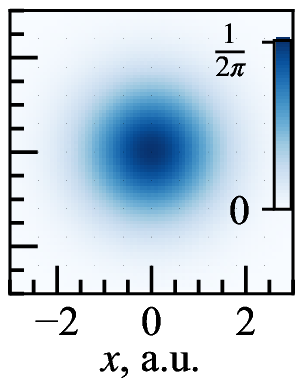}
}

\vspace{-32pt}\captionsetup[subfloat]{position=bottom}
\null\hfill
\hspace{-3pt}
\subfloat[Gauge $\GPH{=}\GPH_{\idx G}$ ($\wW_{p}{=}\wW_{x}{=}1/\sqrt 2$)
\label{@FIG:intro(GG)}
]{
\includegraphics[height=\spheight]{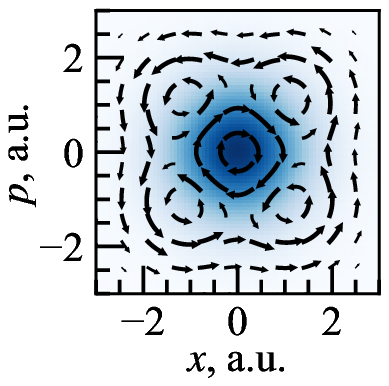}
}
\hspace{-6.5pt}
\subfloat[Bohmian gauge ($\wW_{p}/\wW_{x}{\to}\infty$)
\label{@FIG:intro(Bohm)}
]{
\includegraphics[height=\spheight]{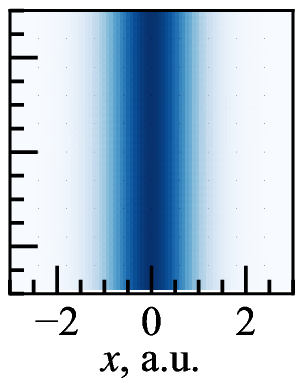}
}
\hfill\null
\caption{The fluid analogies for the ground state of a quantum harmonic oscillator with the Hamiltonian $\hat H{=}\frac12(\hat p^2{+}\hat x^2)$ $(\hbar{=}1)$: \protect\subref{@FIG:intro(Husimi)} the standard Husimi representation (Eqs.~\eqref{10.-J^H}); \protect\subref{@FIG:intro(RBG)} the regularized Bohmian gauge; \protect\subref{@FIG:intro(GG)} the gauge $\GPH{=}\GPH_{\idx G} =\pi \sum_{\delta x,\delta p{=}\pm1} \exp[-\frac34((x{+}\delta x)^2{+}(p{+}\delta p)^2)]$; \protect\subref{@FIG:intro(Bohm)} the Bohmian gauge (Eq.~\eqref{12+.-gen_bohmian_gauge_potential}). The probability fluid density is shown in blue. Black arrows show the vector field of fluid flow.
\label{@FIG:intro}}
\end{figure}
Fluid analogies for complex multidimensional quantum dynamics lay the basis for modern semiclassical computational methods \cite{2018-Weinbub,2019-Larder,1981-Brown,2001-Miller,2006-Saha,2007-Pollak,2015-Hele,2009-Berg, 2016-Vacher, 2001-Donoso, 2006-Lopez, 2008-Shalashilin} of many-body physics \cite{2017-Foss-Feig,2011-Chianca,2011-Cockburn,2015-Kordas}, chemistry \cite{1983-Carruthers,2016-Mai,2017-Orr,2013-Bonnet,2017-Makhov} and optics \cite{2003-OConnell}. In these analogies, the evolution of a quantum state is represented as a flow of an effective compressible probability fluid in the phase space (akin to classical statistical mechanics). Three fluid representations of quantum mechanics used are the Wigner \cite{BOOK-Zachos,1932-Wigner,1957-Stratonovich}, Husimi \cite{1940-Husimi} and Bohmian \cite{1927-Madelung,BOOK-Bacciagaluppi}. Each of them is exact and fully captures the effects of quantum nonlocality and quantum interference. This fact became universally accepted in 1950s after numerous and intriguing debates%
\footnote{See, e.g., the historic debates between Jos\'e Moyal and Paul Dirac on possibility to express quantum mechanics in terms of classical-valued phase space variables \cite{BOOK-Curtright}. A simple illustration of phase space representations of quantum superposition and entangled states is given in \APPREF{@APP:08-WF&HF-an_example}.}%
. Nowadays, the mathematical correspondence between quantum mechanics and classical hydrodynamics can even be demonstrated in actual experiments with fluids \cite{2019-Rozenman,2013-Dragoman}. More importantly, this very correspondence enables the description of quantum evolution in terms of phase-space trajectories -- the ultimate way to beat the curse of dimensionality in numerical applications. However, none of existing fluid analogies are fully developed. Namely, the exact form of fluid trajectories is unknown, except for one-dimensional systems \cite{2013-Veronez,2017-Kakofengitis} and a special class of multidimensional closed systems \cite{1989-Skodje}.

In this work, the general exact fluid analogy for an open multidimensional bosonic system is fully developed. The derived continuity equation for effective fluid has an ambiguity (pointed by Skodje \cite{1989-Skodje}) akin to the gauge invariance in electrodynamics. Consequently, the form of fluid trajectories is a matter of Skodje's flux gauge fixing, which has been overlooked in earlier studies \cite{2015-Colomes,2010-Hiley}. Our findings unify the fluid analogies to quantum mechanics. They reveal that the ``conventional'' Wigner, Husimi and Bohmian quantum hydrodynamic representations are just a tip of the iceberg and represent three out of infinitely many possible Skodje's gauge fixings. In particular, we show that the Bohmian mechanics (Fig.~\ref{@FIG:intro(Bohm)}) is nothing but a singular limiting case of the Husimi representation (Fig.~\ref{@FIG:intro(Husimi)}) in a specific gauge. As exemplified in Fig.~\ref{@FIG:intro}, different gauges result in strikingly different fluid analogies reflecting incompatible aspects of the wave-particle duality. For instance, the fluxes in the Husimi (Fig.~\ref{@FIG:intro(Husimi)}) and Bohmian (Fig.~\ref{@FIG:intro(Bohm)}) gauges highlight the non-vanishing zero-point energy and the stationarity of ground states, respectively.

Moreover, we prove by example that yet unexplored Skodje's gauges constitute a powerful resource to improve the accuracy of semiclassical numerical methods. We develop a methodology to screen for useful gauges and employ it to solve the Schr\"{o}dinger equation in the basis of time-dependent squeezed coherent states evolving along fluid trajectories. A comparison with benchmark initial value representations, such as the coupled coherent states (CCS) approach \cite{2008-Shalashilin}, confirms a superiority of our numerical method to capture tunneling dynamics.

The paper is organized as follows. The next section \ref{@APP:01} reviews the Wigner-Weyl formalism from dynamical perspective, which, to our knowledge, has not been systematically presented in literature but is critical for understanding our reasoning. Readers not interested in methodological details may skip Sec.~\ref{@APP:01} and go directly to Sec.~\ref{@SEC:Fluid_analogy}--\ref{@SEC:Bohmian_mechanics} containing our key results. Sec.~\ref{@SEC:Numerical_example} provides a simple numerical example demonstrating the superiority of the developed framework over traditional semiclassical initial value representations used in computational quantum physics and chemistry. A broader impact of our gauge analysis on quantum science and engineering is addressed in concluding remarks. Proofs of all the theorems and important technical details for applying the developed methods to real-world multidimensional problems are moved into the appendices.

Throughout the paper, we use bold symbols to denote vector quantities characterizing multidimensional systems. In particular, $\pp{=}\{p_1,...,p_{\dimensionality}\}$ and $\xx{=}\{x_1,...,x_{\dimensionality}\}$ denote the momentum and position coordinates of an $N$-dimensional system in the $2\dimensionality$-dimensional phase space, and $\ket{\xx}$ denotes a quantum position eigenstate in Dirac notations.

\section{Informal introduction to Wigner-Weyl quantization\label{@APP:01}}
The detailed expositions of the Wigner-Weyl formalism \cite{1932-Wigner,1946-Groenewold,1949-Bartlett,1957-Stratonovich} can be found in a variety of articles, tutorials and textbooks. Wigner's works \cite{1932-Wigner,1981-O-Connell,1984-Hillery} can serve as the physically appealing and intuitive introduction. Readers seeking for a more formal and axiomatic presentation might prefer the Stratonovich approach detailed in Refs.~\cite{1998-Brif,2011-Cahen}. For thorough discussions on the semiclassical limits, classical analogies and applications one can refer to Refs.~\cite{2010-Polkovnikov,BOOK-Zachos}.

This introduction to the Wigner-Weyl formalism is somewhat non-standard and largely informal. It is not intended to be a substitute of above-mentioned works. Rather, its objective is to introduce the ideas of the Wigner quantization from dynamical perspective in the spirit of our earlier works relying on the operational dynamical modelling (ODM) \cite{2012-Bondar,2015-Zhdanov}, which enables a smooth transition to the hydrodynamic interpretation of quantum mechanics.

It will be convenient for us to treat the general quantum Hermitian operators $\hat A{=}A(\hat{\pp},\hat{\xx})$ as symmetrized polynomials of the form 
\begin{gather}\label{app01.-Hermitian_operator_assumptions}
A(\hat{\pp},\hat{\xx}){=}\sum_{r_1,...,r_{2\dimensionality}}a_{r_1,r_2,...,r_{2\dimensionality}}\left\{\hat p_1^{r_1}...\hat p_{\dimensionality}^{r_{\dimensionality}}, \hat x_1^{r_{\dimensionality+1}}...\hat x_{\dimensionality}^{r_{2\dimensionality}}\right\}_{+},
\end{gather}
where $\left\{\odot,\odot \right\}_{+}$ stands for the anticommutator and $a_{r_1,r_2,...,r_{2\dimensionality}}$ are some real coefficients. This assumption is not really critical and just helps us to avoid extra complications by dealing with complex expansion coefficients.

\subsection{Key concepts in nutshell\label{@APP:01.-key_concepts}}

Quantum mechanics prescribes that the momentum and position operators $\hat{\pp}$ and $\hat{\xx}$ must satisfy identities
\begin{gather}\label{app01.-x,p_commutation_relations}
[\hat p_{n_1},\hat x_{n_2}]{=}{-}i\hbar\delta_{n_1,n_2},~~[\hat x_{n_1},\hat x_{n_2}]{=}[\hat p_{n_1},\hat p_{n_2}]{=}0.
\end{gather}
The way to represent these operators is solely up to us. The coordinate representation
\begin{subequations}\label{app01.-coordinate&momentum_representations}
	\begin{gather}\label{app01.-position_representation}
	\hat x_n{=}x_n,~~\hat p_n{=}{-}i\hbar\tpder{}{x_n},
	\end{gather}
	and momentum representation
	\begin{gather}\label{app01.-momentum_representation}
	\hat x_n{=}i\hbar\tpder{}{p_n},~~\hat p_n{=}p_n
	\end{gather}
\end{subequations}
are the most typical choices. However, nothing prevents us from choosing something more interesting
\begin{gather}\label{app01.-left_Bopp_px_operators}
\elBopp{x_n}{=}x_n{+}\lbdOp[2]{p_n},~~\elBopp{p_n}{=}p_n{-}\lbdOp[2]{x_n}.
\end{gather}
Here we replaced the common ``hat'' above the quantum mechanical operator with the right-pointing curved arrow. The reason will become clear shortly. The operators \eqref{app01.-left_Bopp_px_operators} are called the ``left'' Bopp operators. It is straightforward to verify that they satisfy the identities \eqref{app01.-x,p_commutation_relations}.

At the first glance, the representation \eqref{app01.-left_Bopp_px_operators} looks inconvenient: We doubled the number of variables for no reason. However, one already can notice the remarkable feature that the Bopp operators \eqref{app01.-left_Bopp_px_operators} reduce to the conventional phase space variables $x_n$ and $p_n$ in the classical limit $\hbar{\to}0$. This is clearly not the case for the representations \eqref{app01.-coordinate&momentum_representations}.

What kind of representation do the Bopp operators lead to? To answer this question, it is instructive first to review the position representation \eqref{app01.-position_representation} in more detail. Let $\hat{\rho}$ be a density matrix describing the generic state of a quantum system. Its position representation is $\rho(\xx',\xx''){=}\matel{\xx'}{\hat{\rho}}{\xx''}$. The generic term of the form $A(\hat{\pp},\hat{\xx})\hat{\rho}
B(\hat{\pp},\hat{\xx})$ is represented as
\begin{gather}
\hat A'\hat B''\rho(\xx',\xx''){=}\hat B''\hat A'\rho(\xx',\xx''),
\end{gather}
where $\hat A'{=}A({-}\lbdOp{\xx'},\xx')$ and $\hat B''{=}B(\lbdOp{\xx''},\xx'')$. Note the absence of the ``$-$'' sign in the momentum argument of $\hat B''$. This is the consequence of the fact that $\hat{\rho}\hat B{=}(\hat B^{\dagger}\hat{\rho}^{\dagger})^{\dagger}$. For this reason, the left-acting ``double-primed'' operators satisfy the commutation relations, which are complex conjugate of \eqref{app01.-x,p_commutation_relations} 
\begin{gather}\label{app01.-x,p_commutation_relations_right}
[\hat p_{n_1}'',\hat x_{n_2}'']{=}i\hbar\delta_{n_1,n_2}.
\end{gather}
In addition, the ``primed'' and ''double-primed'' operators act on different variables and hence commute
\begin{gather}\label{app01.-left_right_commute}
[\hat A',\hat B'']{=}0.
\end{gather}

What will happen if we switch to the Bopp representation \eqref{app01.-left_Bopp_px_operators}? What will be the analogs of ``double-primed'' operators? It appears that they can be expressed in terms of the following ``right'' Bopp operators:
\begin{gather}\label{app01.-right_Bopp_px_operators}
\erBopp{x_n}{=}x_n{-}\lbdOp[2]{p_n},~~\erBopp{p_n}{=}p_n{+}\lbdOp[2]{x_n},
\end{gather}
which will be marked with the left-pointing curved arrows. Indeed,
\begin{gather}\label{app01.-x,p_commutation_relations_Bopp_right}
[\erBopp{p_{n_1}},\erBopp{x_{n_2}}]{=}i\hbar\delta_{n_1,n_2}
\end{gather}
and
\begin{gather}\label{app01.-left_right_Bopp_commute}
[\elBopp{A},\erBopp{B}]{=}0
\end{gather}
for any $\elBopp{A}{=}A(\lBopp{\pp},\lBopp{\xx})$ and $\erBopp{B}{=}B(\rBopp{\pp},\rBopp{\xx})$. It is obvious that the relations \eqref{app01.-x,p_commutation_relations_Bopp_right} and \eqref{app01.-left_right_Bopp_commute} are identical to the equalities \eqref{app01.-x,p_commutation_relations_right} and \eqref{app01.-left_right_commute}%
\footnote{According to the prominent result of Leon Cohen, one can introduce infinitely many different phase space representations of quantum mechanics \cite{1966-Cohen}. However, the relations \eqref{app01.-x,p_commutation_relations_Bopp_right} and \eqref{app01.-left_right_Bopp_commute} make the Wigner representation special and unique.}.
Hence, the generic correspondence rule for the Bopp representation should read 
\begin{gather}\label{app01.-Bopp_correspondence_principle}
A(\hat{\pp},\hat{\xx})\hat{\rho}
B(\hat{\pp},\hat{\xx})\to A(\lBopp{\pp},\lBopp{\xx})B(\rBopp{\pp},\rBopp{\xx})\WF(\pp,\xx).
\end{gather}
Here $\WF(\pp,\xx)$ is the associated representation of the density matrix, which, as we are going to show right now, is exactly the Wigner function. Probably, the easiest way to deduce the form of $\WF(\pp,\xx)$ is to exploit the analogies with the coordinate and momentum representations \eqref{app01.-coordinate&momentum_representations} of a pure quantum state $\ket{\psi}$. Denote the wavefunctions in the position and momentum representations \eqref{app01.-position_representation} and \eqref{app01.-momentum_representation} as $\psi(\xx){=}\scpr{\xx}{\psi}$ and $\psi(\pp){=}\scpr{\pp}{\psi}$, respectively. Recall that the wavefunctions $\psi(\xx)$ and $\psi(\pp)$ are related by the Fourier transform $\FT{\xx}{\pp}$
\begin{gather}\label{app01.-Fourier_transform_operator}
\psi(\pp){=}\FT{\xx}{\pp}[\psi(\xx)]{\equiv}\tfrac{1}{(2\pi\hbar)^{\frac{N}2}}\inftyints\psi(\xx)e^{-i\frac{\pp\cdot\xx}{\hbar}}\diff^N\xx.
\end{gather}
Let us apply the inverse Fourier transform to $\WF(\pp,\xx)$ with respect to the momentum variables $\pp$: $\WF(\pp,\xx)\to\tilde\WF(\llambda,\xx){=}\IFT{\pp}{\llambda}[\WF(\pp,\xx)]$%
\footnote{The function $\WF(\llambda,\xx)$, known as the Blokhintsev function, was first introduced in Ref.~\cite{1940-Blokhintzev}.}. 
By comparing with \eqref{app01.-coordinate&momentum_representations}, one can see that the respective changes in the Bopp operators should be
\begin{gather}\notag
\lBopp{\xx}{\to}\xx{+}\tfrac{\llambda}2,~~\lBopp{\pp}{\to}\lbdOp{\llambda}{+}\lbdOp[2]{\xx},\\%
\rBopp{\xx}{\to}\xx{-}\tfrac{\llambda}2,~~\rBopp{\pp}{\to}\lbdOp{\llambda}{-}\lbdOp[2]{\xx}.\label{app01.-Blokhintsev_transform_for_Bopp_operators}
\end{gather}
Relations \eqref{app01.-Blokhintsev_transform_for_Bopp_operators} identify $\tilde\WF(\llambda,\xx)$ as the position representation of a density matrix with the additional variable substitutions $\xx'{\to}\xx{+}\tfrac{\llambda}2$ and $\xx''{\to}\xx{-}\tfrac{\llambda}2$, i.e.,
$
\tilde\WF(\llambda,\xx){\propto}\matel{\xx{+}\tfrac{\llambda}2}{\hat{\rho}}{\xx{-}\tfrac{\llambda}2}
$. The expression for $\WF(\pp,\xx)$ follows from the relation $\WF(\pp,\xx){=}\FT{\llambda}{\pp}[\tilde\WF(\llambda,\xx)]$
\begin{gather}\label{app01.-Wigner_function}
\WF(\pp,\xx){=}\tfrac{1}{(2\pi\hbar)^{\dimensionality}}\inftyints\matel{\xx{+}\tfrac{\llambda}2}{\hat{\rho}}{\xx{-}\tfrac{\llambda}2}e^{{-}i\frac{\pp\cdot\llambda}{\hbar}}\diff^{\dimensionality}\llambda.
\end{gather}
Eq.~\eqref{app01.-Wigner_function} coincides with the definition of Wigner function.%
\footnote{The normalization prefactor in Eq.~\texorpdfstring{\eqref{app01.-Wigner_function}}{} is chosen such that 
$\inftyints \mathrm d^{\dimensionality}\pp\, \mathrm d^{\dimensionality}\xx\WF(\pp,\xx){=}1$.
} 
In the case of a pure state $\ket{\psi}$ the definition \eqref{app01.-Wigner_function} reduces to 
\begin{gather}\label{app01.-WF_of_pure_state}
\WF(\pp,\xx){=}\tfrac1{\left(2\pi\hbar\right)^{\dimensionality}}\inftyints\psi^*(\xx{-}\tfrac{\xx'}2)\psi(\xx{+}\tfrac{\xx'}2)e^{-i\frac{\pp\cdot\xx'}{\hbar}}\diff^N\xx'.
\end{gather}
\subsection{Moyal product and Weyl symbols}
Consider an arbitrary Hermitian operator of the form 
\begin{gather}\label{app01.-separable_operator}
\hat O{=} O(\hat{\pp},\hat{\xx}){=}O_1(\hat{\xx}){+}O_2(\hat{\pp}).
\end{gather}
The associated Bopp operator $O(\lBopp{\pp},\lBopp{\xx})$ can be expanded into the Taylor series
\begin{align}
\lBopp{O}&{=}
O_1(\xx{+}\lbdOp[2]{\pp}){+}O_2(\pp{-}\lbdOp[2]{\xx}){=}\notag\\
&O_1(\xx)\sum_{l=0}^{\infty}\tfrac{\left(i\tfrac{\hbar}2(\pderl{\xx}{\cdot}\pderr{\pp})\right)^l}{l!}{+}O_2(\pp)\sum_{l=0}^{\infty}\tfrac{\left({-}i\tfrac{\hbar}2(\pderl{\pp}{\cdot}\pderr{\xx})\right)^l}{l!}{=}\notag\\
&O(\pp,\xx)\sum_{l=0}^{\infty}\tfrac{\left(i\tfrac{\hbar}2(\pderl{\xx}{\cdot}\pderr{\pp}{-}\pderl{\pp}{\cdot}\pderr{\xx})\right)^l}{l!}{=}
O(\pp,\xx)\star,
\end{align}
where we introduced the Moyal product
\begin{gather}\label{app01.-Moyal_product}
\star{=}\exp\big(i\tfrac{\hbar}2(\pderl{\xx}{\cdot}\pderr{\pp}-\pderl{\pp}{\cdot}\pderr{\xx})
\big).
\end{gather}
It can be shown in a similar fashion that $\rBopp{O}\WF(\pp,\xx){=}\WF(\pp,\xx){\star}O(\pp,\xx)$. These relations can be generalized to arbitrary operators and allow us to rewrite the correspondence relation \eqref{app01.-Bopp_correspondence_principle} in an appealing form
\begin{gather}\label{app01.-Moyal_correspondence_principle}
A(\hat{\pp},\hat{\xx})\hat{\rho}
B(\hat{\pp},\hat{\xx})\to \Weyl A(\pp,\xx){\star}\WF(\pp,\xx){\star}\Weyl B(\pp,\xx).\tag{\ref{app01.-Bopp_correspondence_principle}'}
\end{gather}
However, there is a complication that generally $\Weyl A(\pp,\xx){\ne}A(\pp,\xx)$, $\Weyl B(\pp,\xx){\ne}B(\pp,\xx)$ in the above formula, except for operators of the separable form \eqref{app01.-separable_operator}. Here we will not go deeper into the construction and properties of the quantities $\Weyl A$ and $\Weyl B$, called the Weyl symbols of operators, and instead refer readers to excellent review \cite{1989-Cohen}, where an interesting historical remarks can be also found.

\subsection{Quantum dynamics in Wigner representation}
Using the correspondence relations \eqref{app01.-Bopp_correspondence_principle} and \eqref{app01.-Moyal_correspondence_principle}, one can write the Liouville–von Neumann equation
\begin{gather}
\tpder{}{t}\hat{\rho}{=}\tfrac{1}{i\hbar}[\hat H,\hat{\rho}]
\end{gather}
in the Wigner representation as
\begin{gather}\label{app01.-quantum_Liouville_equation}
\tpder{}{t}\WF{=}\{\{\Weyl{H},\WF\}\},
\end{gather}
where the right hand side
\begin{gather}
\{\{\Weyl{H},\WF\}\}{=}\frac{1}{i\hbar}(\lBopp H{-}\rBopp H)\WF{=}\frac{1}{i\hbar}(\Weyl H{\star}\WF{-}\WF{\star}\Weyl H){=}\notag\\
\frac{2}{\hbar}\Weyl H\sin\big(\tfrac{\hbar}2(\pderl{\xx}{\cdot}\pderr{\pp}-\pderl{\pp}{\cdot}\pderr{\xx})\big)\WF\label{app01.-Moyal_bracket}
\end{gather}
is called the Moyal bracket of $\Weyl H$ and $\WF$. Eq.~\eqref{app01.-quantum_Liouville_equation} is often referred as the quantum Liouville equation. The reason for such a name becomes clear once the Moyal bracket \eqref{app01.-Moyal_bracket} is expanded into the series with respect to $\hbar$
\begin{gather}
\tpder{}{t}\WF{=}\{\Weyl{H},\WF\}{+}O(\hbar^2),
\end{gather}
where $\{\Weyl{H},\WF\}{=}\pder{\Weyl H}{\xx}\pder{\WF}{\pp}{-}\pder{\Weyl H}{\pp}\pder{\WF}{\xx}$ is the usual classical Poisson bracket. Thus, because of relations \eqref{app01.-left_Bopp_px_operators} the quantum Liouville equation \eqref{app01.-quantum_Liouville_equation} directly reduces to its expected classical counterpart in the limit $\hbar{\to}0$.

\section{Fluid analogy for bosonic systems\label{@SEC:Fluid_analogy}}
Consider a generic $\dimensionality$-dimensional open system with Hamiltonian $\hat H$ obeying the Lindblad-like master equation \cite{1972-Kossakowski,1976-Lindblad,2003-Havel}
\begin{gather}\label{10.-Lindblad_equation}
\tpder{}{t}\hat\rho{=}\tfrac{i}{\hbar}[\hat\rho,\hat H]{+}\tfrac12\textstyle{\sum_{k}}\big([\hat L_k(t){\hat\rho},\hat L^{\dagger}_k(t)]{+}\mbox{h.c.}\big),
\end{gather}
where dissipation operators $\hat L_k$ are generally time- and $\hat\rho$-dependent. We wish to cast Eq.~\eqref{10.-Lindblad_equation} for the density matrix $\hat\rho$ into an evolution equation for effective multidimensional probability fluid. Our starting point is the Wigner-Weyl formalism reviewed in the previous section, where the state of the system is described by the Wigner function \eqref{app01.-WF_of_pure_state}.
\begin{theorem}[see \APPREF{@APP:07-theorem_continuity_WF_proof} for proof]\label{10.-theorem_continuity_WF}
Master equation \eqref{10.-Lindblad_equation} can be cast into the continuity equation
\begin{gather}\label{10.-WF_continuity_equation}
\tpder{}t\WF(\pp,\xx){=}{-}\mathbb{\nabla}\cdot\WWC(\pp,\xx){=}{-}\tpder{}{\pp}{\cdot}\WWC_{\pp}{-}\tpder{}{\xx}{\cdot}\WWC_{\xx},
\end{gather}
where the components of the $2N$-dimensional flow are%
\begin{subequations}\label{10.-J^W}
	\begin{gather}
	\WWC_{\pp}{=}{-}\WF{\sincproduct}\tpder{\Weyl H}{\xx}{+}
	\tfrac {i\hbar}2\textstyle{\sum_k}\big((\Weyl{L}_k{\star}\WF){\sincproduct}\tpder{\Weyl{L}^*_k}{\xx}{+}\mbox{c.c.}
	\big),\label{10.-J^W_p}
	\\
	\WWC_{\xx}{=}\WF{\sincproduct}\tpder{\Weyl H}{\pp}{-}
	\tfrac {i\hbar}2\textstyle{\sum_k}\big((\Weyl{L}_k{\star}\WF){\sincproduct}\tpder{\Weyl{L}^*_k}{\pp}{+}\mbox{c.c.}
	\big)\label{10.-J^W_x}.
	\end{gather}
\end{subequations}
\end{theorem}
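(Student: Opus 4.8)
The plan is to rewrite the master equation~\eqref{10.-Lindblad_equation} in the Wigner representation through the Moyal correspondence~\eqref{app01.-Moyal_correspondence_principle}, and then to recognize the resulting phase-space expression as a pure divergence by means of an elementary ``integration by parts'' identity for the $\star$-product. It is convenient to write $\hat\Lambda$ for the antisymmetric bidifferential operator appearing in the exponent of the Moyal product~\eqref{app01.-Moyal_product}, so that $\star{=}e^{i\hbar\hat\Lambda/2}$, the Moyal bracket is $\{\{f,g\}\}{=}\tfrac{2}{\hbar}f\sin(\tfrac{\hbar}{2}\hat\Lambda)g{=}\tfrac{1}{i\hbar}(f{\star}g{-}g{\star}f)$, and $f{\sincproduct}g{=}f\,\mathrm{sinc}(\tfrac{\hbar}{2}\hat\Lambda)g$ with $\mathrm{sinc}(z){=}\sin z/z$; in particular $\sincproduct$ is symmetric, since $\mathrm{sinc}$ is even.

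First I would pass to the Wigner picture. Applying~\eqref{app01.-Moyal_correspondence_principle} (together with the standard facts that the Weyl transform sends operator products to $\star$-products and adjoints to complex conjugates, both also readily recovered from the Bopp calculus of Sec.~\ref{@APP:01}) to the unitary term $\tfrac{i}{\hbar}[\hat\rho,\hat H]$ and to each dissipator $\tfrac{1}{2}([\hat L_k\hat\rho,\hat L_k^\dagger]{+}\mbox{h.c.})$ of~\eqref{10.-Lindblad_equation} yields
\[
\tpder{}{t}\WF{=}\{\{\Weyl{H},\WF\}\}{+}\textstyle\sum_k\big(\Weyl{L}_k{\star}\WF{\star}\Weyl{L}^{*}_k{-}\tfrac{1}{2}\Weyl{L}^{*}_k{\star}\Weyl{L}_k{\star}\WF{-}\tfrac{1}{2}\WF{\star}\Weyl{L}^{*}_k{\star}\Weyl{L}_k\big).
\]
Introducing $\mathcal{M}_k{\equiv}\Weyl{L}_k{\star}\WF$ and using associativity of $\star$ together with $\WF^{*}{=}\WF$ (whence $\mathcal{M}_k^{*}{=}\WF{\star}\Weyl{L}^{*}_k$ and $\Weyl{L}_k{\star}\WF{\star}\Weyl{L}^{*}_k{=}\tfrac{1}{2}\mathcal{M}_k{\star}\Weyl{L}^{*}_k{+}\tfrac{1}{2}\Weyl{L}_k{\star}\mathcal{M}_k^{*}$), each $k$-th dissipator collapses to the manifestly real combination $\tfrac{i\hbar}{2}\{\{\mathcal{M}_k,\Weyl{L}^{*}_k\}\}{+}\mbox{c.c.}$

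The core of the argument is the divergence identity: for any (formal) phase-space functions $f,g$,
\[
\{\{f,g\}\}{=}\tpder{}{\pp}{\cdot}\big(g{\sincproduct}\tpder{f}{\xx}\big){-}\tpder{}{\xx}{\cdot}\big(g{\sincproduct}\tpder{f}{\pp}\big).
\]
I would prove it by writing $\{\{f,g\}\}{=}f\,\hat\Lambda\,\mathrm{sinc}(\tfrac{\hbar}{2}\hat\Lambda)\,g$ (using $\tfrac{2}{\hbar}\sin(\tfrac{\hbar}{2}z){=}z\,\mathrm{sinc}(\tfrac{\hbar}{2}z)$ to split off one factor of $\hat\Lambda$), then expanding that extracted $\hat\Lambda$ into the elementary left/right-acting derivative pairs of~\eqref{app01.-Moyal_product} (which commute with $\mathrm{sinc}(\tfrac{\hbar}{2}\hat\Lambda)$), so that the right-hand side becomes $\textstyle\sum_n\big[(\tpder{f}{x_n}){\sincproduct}(\tpder{g}{p_n}){-}(\tpder{f}{p_n}){\sincproduct}(\tpder{g}{x_n})\big]$; applying the Leibniz rule $\tpder{}{q}(u{\sincproduct}v){=}(\tpder{u}{q}){\sincproduct}v{+}u{\sincproduct}(\tpder{v}{q})$ (valid because $\sincproduct$ has constant coefficients) converts each summand into a divergence, and the spurious mixed-second-derivative terms cancel because $\partial_{x_n}\partial_{p_n}f{=}\partial_{p_n}\partial_{x_n}f$, leaving the stated form (the symmetry of $\sincproduct$ then puts $g$ on the left). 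Different ways of forming this divergence are possible and correspond precisely to Skodje's gauge freedom; the grouping above yields precisely the flux~\eqref{10.-J^W}. Specializing to $(f,g){=}(\Weyl{H},\WF)$ and comparing with $\tpder{}{t}\WF{=}{-}\tpder{}{\pp}{\cdot}\WWC_{\pp}{-}\tpder{}{\xx}{\cdot}\WWC_{\xx}$ reproduces the $\Weyl{H}$-dependent parts of $\WWC_{\pp}$ and $\WWC_{\xx}$ in~\eqref{10.-J^W}; specializing to $(f,g){=}(\Weyl{L}^{*}_k,\mathcal{M}_k)$, using antisymmetry of $\{\{\cdot,\cdot\}\}$, and adding the complex conjugate turns $\tfrac{i\hbar}{2}\{\{\mathcal{M}_k,\Weyl{L}^{*}_k\}\}{+}\mbox{c.c.}$ into a divergence whose $\pp$- and $\xx$-components are exactly the dissipative terms of~\eqref{10.-J^W}. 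Summing the unitary and dissipative contributions yields~\eqref{10.-WF_continuity_equation}.

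I expect the only genuine difficulty to be bookkeeping: one must keep the orientation of every arrow-derivative straight while extracting a factor of $\hat\Lambda$ from inside $\sincproduct$ and while applying the Leibniz rule, and one must propagate complex conjugation correctly through the noncommutative $\star$-products — in particular $(\Weyl{A}{\star}\Weyl{B})^{*}{=}\Weyl{B}^{*}{\star}\Weyl{A}^{*}$ and $\WF^{*}{=}\WF$ — so that the three Lindblad contributions really do reduce to ``a real quantity plus its complex conjugate'' before the divergence identity is applied. It is also worth stating explicitly that everything here is meant at the level of the formal $\hbar$-expansion of $\star$ (exact for symbols polynomial in $\pp,\xx$, otherwise an asymptotic identity), which is the sense in which the flux $\WWC$ in~\eqref{10.-J^W} is defined.
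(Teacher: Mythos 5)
Your proposal takes essentially the same route as the paper's proof in \APPREF{@APP:07-theorem_continuity_WF_proof}: pass to Moyal brackets via the Weyl correspondence, reassemble the dissipator into $\tfrac{i\hbar}{2}\{\{\Weyl L_k{\star}\WF,\Weyl L^{*}_k\}\}{+}\mbox{c.c.}$, and then invoke the divergence identity $\{\{f,g\}\}{=}\tpder{}{\pp}{\cdot}(g\sincproduct\tpder{f}{\xx}){-}\tpder{}{\xx}{\cdot}(g\sincproduct\tpder{f}{\pp})$, which is precisely the paper's Lemma~\ref{app07.-lemma_Moyal_bracket} (up to antisymmetry of the bracket) and is proved by the same cancellation of the mixed second-derivative terms, merely read in the opposite direction. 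The argument is correct, and your explicit remarks on the gauge non-uniqueness of the divergence splitting and on the formal character of the $\star$-manipulations are consistent with the paper.
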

\noindent Here $\Weyl L_k(\pp,\xx)$ and $\Weyl H(\pp,\xx)$ are the Weyl symbols of the operators $\hat H$ and $\hat L_k$. The binary operations $\star$ and $\sincproduct$ are defined as
\begin{gather}\notag
\star{=}e^{i\frac{\hbar}2(\pderl{\xx}{\cdot}\pderr{\pp}-\pderl{\pp}{\cdot}\pderr{\xx})},~
\sincproduct{=}\sinc\big(\tfrac{\hbar}2(\pderl{\xx}{\cdot}\pderr{\pp}{-}\pderl{\pp}{\cdot}\pderr{\xx})
\big),
\end{gather}
where $\sinc(z){=}{\sin(z)}/{z}$ and the arrows indicate the directions of differentiation, e.g., $f(\pp,\xx)\pderl{\xx}g(\pp,\xx)=g(\pp,\xx)\pderr{\xx}f(\pp,\xx){=}\tpder{f(\pp,\xx)}{\xx}g(\pp,\xx)$ \cite{2010-Polkovnikov}.

The analogy between classical fluid density and $\WF(\pp,\xx)$ is incomplete because the latter typically has negative values \cite{2018-Oliva}. To fix this issue, we introduce \emph{the generalized Husimi function} $\HF(\pp,\xx)$ as a Gaussian convolution of $\WF(\pp,\xx)$:
\begin{gather}
\HF(\pp,\xx){=}\KernWD_{\wwW_{\pp},\wwW_{\xx}}\WF(\pp,\xx){=}\inftyints\WF(\pp',\xx')\times\notag\\
\KernW_{\wwW_{\pp},\wwW_{\xx}}(\pp{-}\pp',\xx{-}\xx') \diff^N\pp'\diff^N\xx',\label{10.-HF(gen)_definition}
\end{gather}
where $\wwW_{\pp}$ and $\wwW_{\xx}$ are arbitrary width parameters obeying the inequality $\wW_{p_n}\wW_{x_n}{>}\tfrac{\hbar}2$, and the kernel $\KernW_{\wwW_{\pp},\wwW_{\xx}}$ is%
\footnote{The convolution operator can also be written in the differential form 
$
\KernWD_{\wwW_{\pp},\wwW_{\xx}}{=}\prod_{n=1}^{\dimensionality}\exp({\frac12\wW_{p_n}^2\tpder{^2}{p_n^2}{+}\frac12\wW_{x_n}^2\tpder{^2}{x_n^2}})
$ (see, e.g., Ref.~\cite{2003-Ulmer}).
}

\begin{gather}\label{10.-Husimi_kernel}
\KernW_{\wwW_{\pp},\wwW_{\xx}}(\delta\pp,\delta\xx){=}
\textstyle{\prod_{n=1}^{\dimensionality}}
\frac{1}{2\pi\wW_{p_n}\wW_{x_n}}e^{{-}\frac{\delta p_n^2}{2\wW_{p_n}^2}{-}\frac{\delta x_n^2}{2\wW_{x_n}^2}}.
\end{gather}

Function \eqref{10.-HF(gen)_definition} is everywhere strictly positive. It reduces to the standard Husimi function (which can have zeros) in the limit $\wW_{p_n}\wW_{x_n}{\to}\tfrac{\hbar}2$%
\footnote{Note that the Husimi kernels \eqref{10.-Husimi_kernel} with $\wW_{p_k}\wW_{x_k}{=}\tfrac{\hbar}2$ and $\wW_{p_k}\wW_{x_k}{>}\tfrac{\hbar}2$ can be related to thermal states of a harmonic oscillator with zero and non-zero temperatures, respectively \cite{1949-Bartlett,1986-Dodonov}. It worth stressing that, unlike physical thermal averaging, the ``thermal'' Husimi transform remains invertible and does not lead to any loss of information contained in the original Wigner function.}:
\begin{gather}\label{10.-HF_definition}
\HF(\ppc,\xxc){=}\KernWD_{\frac{\hbar}{\sqrt2\wwc},\frac{\wwc}{\sqrt2}}\WF(\ppc,\xxc){=}\tfrac{\matel{\ppc,\xxc}{\hat\rho}{\ppc,\xxc}}{(2\pi\hbar)^N}{\geq}0,
\end{gather}
where $\ket{\ppc, \xxc}$ is the multidimensional squeezed coherent state localized at $\{\ppc, \xxc\}$:
\begin{align}\label{10.-CS_definition}
\scpr{\xx}{\ppc,\xxc}{=} \textstyle{\prod_{n{=}1}^{\dimensionality}}\pi^{{-}\frac12}\wc_n^{{-}\frac14}
e^{-\frac{(x_n{-}\xc_{n})^2}{2\wc_n^2}{+}\frac i{\hbar}\pc_{n}(x_n{-}\xc_{n})}.
\end{align}

\begin{theorem}[see \APPREF{@APP:09-theorem_continuity_HF_proof} for proof]\label{10.-theorem_continuity_HF}
	Master equation \eqref{10.-Lindblad_equation} can be cast into the continuity equation
	\begin{gather}\label{10.-HF_continuity_equation}
	\tpder{}t\HF(\pp,\xx){=}{-}\mathbb{\nabla}\cdot\HHC(\pp,\xx){=}{-}\tpder{}{\pp}{\cdot}\HHC_{\pp}{-}\tpder{}{\xx}{\cdot}\HHC_{\xx},
	\end{gather}
	for the generalized Husimi function $\HF(\pp,\xx)$, where the components of the Husimi flow vector field $\HHC$ are expressed via the Wigner flows \eqref{10.-J^W} as
\begin{gather}\label{10.-J^H}
\HHC_{\pp} {=} \KernWD_{\wwW_{\pp},\wwW_{\xx}}\WWC_{\pp},~~\HHC_{\xx} {=} \KernWD_{\wwW_{\pp},\wwW_{\xx}}\WWC_{\xx}.
\end{gather}
\end{theorem}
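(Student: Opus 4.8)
The plan is to obtain Eq.~\eqref{10.-HF_continuity_equation} essentially for free by applying the smoothing operator $\KernWD_{\wwW_{\pp},\wwW_{\xx}}$ to both sides of the Wigner continuity equation \eqref{10.-WF_continuity_equation} furnished by Theorem~\ref{10.-theorem_continuity_WF}, and exploiting that this operator is translation-invariant in $(\pp,\xx)$ and has no $t$-dependence.

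First I would record two elementary commutation properties of $\KernWD_{\wwW_{\pp},\wwW_{\xx}}$. (i) Because the convolution kernel \eqref{10.-Husimi_kernel} depends only on the \emph{difference} of its arguments --- equivalently, because in differential form $\KernWD_{\wwW_{\pp},\wwW_{\xx}}$ is a constant-coefficient operator $\prod_{n}\exp(\tfrac12\wW_{p_n}^2\partial_{p_n}^2+\tfrac12\wW_{x_n}^2\partial_{x_n}^2)$ --- it commutes with every first-order derivative $\partial/\partial p_n$ and $\partial/\partial x_n$, hence with the full phase-space divergence $\nabla\cdot$. (ii) Since the widths $\wwW_{\pp},\wwW_{\xx}$ are fixed constants (time-independent), $\KernWD_{\wwW_{\pp},\wwW_{\xx}}$ commutes with $\partial/\partial t$. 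Both statements are made rigorous by differentiating under the integral sign in \eqref{10.-HF(gen)_definition}: this is permitted because the Gaussian kernel and all its derivatives decay superpolynomially, whereas the integrand $\WF\,\WWC$ --- a product of the Wigner function with polynomial Weyl symbols of $\hat H$ and $\hat L_k$, possibly acted on by $\star$ --- is a tempered object, so the convolution and all exchanged limits converge absolutely and no boundary terms arise.

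With these in hand the derivation is immediate:
\[
\tpder{}{t}\HF=\tpder{}{t}\KernWD_{\wwW_{\pp},\wwW_{\xx}}\WF=\KernWD_{\wwW_{\pp},\wwW_{\xx}}\tpder{}{t}\WF=-\KernWD_{\wwW_{\pp},\wwW_{\xx}}\,\nabla\cdot\WWC=-\nabla\cdot\big(\KernWD_{\wwW_{\pp},\wwW_{\xx}}\WWC\big),
\]
where the third equality is Theorem~\ref{10.-theorem_continuity_WF} and the last uses commutation property (i). Reading off components gives exactly \eqref{10.-HF_continuity_equation} with $\HHC_{\pp}=\KernWD_{\wwW_{\pp},\wwW_{\xx}}\WWC_{\pp}$ and $\HHC_{\xx}=\KernWD_{\wwW_{\pp},\wwW_{\xx}}\WWC_{\xx}$, i.e.\ Eq.~\eqref{10.-J^H}. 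As a final sanity check I would verify the $\wW_{p_n}\wW_{x_n}\!\to\!\tfrac{\hbar}{2}$ limit: there \eqref{10.-HF(gen)_definition} collapses to \eqref{10.-HF_definition}, positivity $\HF\ge0$ survives, and $\HHC$ reduces to the standard Husimi current, so the conventional Husimi case needs no separate treatment.

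The only real obstacle is the analytic bookkeeping behind step (ii): specifying the regularity/decay hypotheses on $\WF$, $\Weyl H$ and $\Weyl L_k$ under which \eqref{10.-WF_continuity_equation} holds (these are inherited from Theorem~\ref{10.-theorem_continuity_WF}) and checking that $\KernWD_{\wwW_{\pp},\wwW_{\xx}}$ maps that class to itself, so that the interchange of the convolution with $\partial_t$ and with $\nabla\cdot$ is legitimate and $\HHC$ is again a well-defined smooth vector field. For physically relevant (possibly non-normalizable, open-system, state-dependent-$\hat L_k$) situations this is routine --- Gaussian convolution sends tempered distributions to smooth functions and commutes with constant-coefficient differential operators --- but it is the one place where care is genuinely required; everything else is a one-line algebraic consequence of Theorem~\ref{10.-theorem_continuity_WF}.
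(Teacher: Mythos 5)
Your proposal is correct and follows essentially the same route as the paper's proof in \APPREF{@APP:09-theorem_continuity_HF_proof}: apply $\KernWD_{\wwW_{\pp},\wwW_{\xx}}$ to both sides of Eq.~\eqref{10.-WF_continuity_equation} and use the differential (constant-coefficient) form \eqref{app09.-Husimi_transform_diff} of the convolution operator to justify $\KernWD_{\wwW_{\pp},\wwW_{\xx}}\mathbb{\nabla}{=}\mathbb{\nabla}\KernWD_{\wwW_{\pp},\wwW_{\xx}}$, which yields Eq.~\eqref{10.-J^H} directly. The paper additionally records the normalization identity \eqref{app01.-HF_normalization} as motivation, but the core argument is identical to yours.
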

Theorem~\ref{10.-theorem_continuity_HF} and strict positivity of $\HF(\pp,\xx)$ allow to introduce the trajectories of elementary parcels of the effective Husimi fluid in the Lagrangian picture:
\begin{gather}\label{10.-Husimi_trajectories}
\tDer{}t\xx{=}\vvH_{\xx}(\pp,\xx),~~\tDer{}t\pp{=}\vvH_{\pp}(\pp,\xx),
\end{gather}
where the components of the velocity field are
$
\vvH_{\pp}(\pp,\xx) = {\HHC_{\pp}(\pp,\xx)}/{\HF(\pp,\xx)}$, and $\vvH_{\xx}(\pp,\xx) = {\HHC_{\xx}(\pp,\xx)}/{\HF(\pp,\xx)}.
$

\section{Gauge transformations\label{@SEC:Gauge_transformations}}
In classical fluid dynamics, equations governing mass density redistribution follow from underlying velocity fields. The situation is reverse in quantum hydrodynamics: The flow fields \eqref{10.-J^W} and \eqref{10.-J^H} were recovered from the master equation \eqref{10.-Lindblad_equation}. Such a recovery is not unique \cite{1989-Skodje}: The substitutions
\begin{gather}\label{11.-gauge_transformations}
\WWC\to\WWC{+}\delta\WWC,~~\HHC\to\HHC{+}\delta\HHC
\end{gather}
leave the quantum dynamics (i.e, Eqs.~\eqref{10.-WF_continuity_equation} and \eqref{10.-HF_continuity_equation}) intact if the auxiliary gauge fields $\delta\WWC$ and $\delta\HHC$ have zero divergences:  $\mathbb{\nabla}\cdot\delta\WWC{=}\mathbb{\nabla}\cdot\delta\HHC{=}0$. For instance, $\delta\HHC$ may represent invariable swirls in the phase space:
\begin{gather}
\delta\HC_{p_n}{=}{-}\tpder{}{x_n}\GPH_n(\pp,\xx),~ \delta\HC_{x_n}{=}\tpder{}{p_n}\GPH_n(\pp,\xx),\label{11.-gauge_flows_H}
\end{gather}
where the gauge potentials $\GPH_n(\pp,\xx)$ are arbitrary bounded twice continuously differentiable functions. 

This gauge freedom has been highlighted in, e.g., Refs.~\cite{1989-Skodje,2001-Donoso,2006-Lopez,2013-Veronez}, although has been utilized only once \cite{2012-Wang}. However, it has been overlooked that the price for the gauge freedom is the constraint
\begin{gather}\label{11.-boundary_conditions_on_flows}
\left.\WWC,\HHC
\right|_{\xx,\pp{\to}{\pm}\infty}{\to}0
\end{gather}
ensuring that there is no probability sources and drains at $|\pp|,|\xx|{=}{\pm}\infty$. For instance, the constraint \eqref{11.-boundary_conditions_on_flows} rules out uniform gauge flows $\delta\HHC{=}\mathrm{const}$ allowed by Eqs.~\eqref{11.-gauge_flows_H}.

It is noteworthy that the analytical expressions \eqref{10.-J^W} linearly depend on $\WF$ and hence satisfy Eq.~\eqref{11.-boundary_conditions_on_flows} for all localized quasiprobability densities. This feature opens opportunities to design dynamical models obeying desired constraints $\flowcnstr_r(\HHC){=}0$ $(r{=}1,...R)$ without explicitly solving the complicated boundary value problem \eqref{11.-boundary_conditions_on_flows}. Namely, one can start from Eqs.~\eqref{10.-J^W}, \eqref{10.-J^H} defining \emph{the ``default'' gauge} $\GPH_n{=}0$ and then seek new gauge potentials $\GPH_n(\pp,\xx)$ obeying the following constraints:%
\begin{subequations}\label{10.-constraints_on_gauge_potentials}
\begin{gather}\label{10.-constraints_on_gauge_potentials_b}
\left.\GPH_n,\mathbb{\nabla}\GPH_n\right|_{\xx,\pp{\to}{\pm}\infty}{\to}0,\\
\flowcnstr_r(\HHC{+}\delta\HHC(\GPH_1,...,\GPH_{\dimensionality})){=}0.\label{10.-constraints_on_gauge_potentials_u}
\end{gather}
\end{subequations}

Let us demonstrate this generic scheme on an example.

\section{Bohmian mechanics\label{@SEC:Bohmian_mechanics}}
Consider a closed $\dimensionality$-dimensional quantum system with the Hamiltonian
\begin{gather}\label{02'.-Hamiltonian}
\hat H = \textstyle{\sum_{n{=}1}^{\dimensionality}}\tfrac{1}{2 m_n}\hat p_n^2{+}V(\hat{\xx}),
\end{gather} whose Weyl symbol is $\Weyl H(\pp,\xx){=}\sum_{k{=}1}^{\dimensionality}\frac{p_k^2}{2 m_k}{+}V(\xx)$. We assume that the system is in a pure state described by the wavefunction $\psi(\xx,t){=}\scpr{\xx}{\psi(t)}$. Our aim is to construct the hydrodynamic analogy (marked by double primes $''$) to the dynamics of this system, where every $n$-th component ${\vGB_{x_n}}$ of the fluid velocity field \eqref{10.-Husimi_trajectories} is independent of the corresponding conjugate momentum $p_n$. The constraint  \eqref{10.-constraints_on_gauge_potentials_u} on ${\GBC_{\xx}}$ becomes
\begin{gather}\label{12.-generalized_bohmian_position_flow}
{\GBC_{x_n}}(\pp,\xx){=}\tfrac{\pBohmReg_n(p_1,...,p_{n-1},p_{n+1},...,p_{\dimensionality},\xx)}{m_n}\HF''(\pp,\xx).
\end{gather}
Here the velocities ${\vGB_{x_n}}$ are expressed in terms of yet-unknown $p_n$-independent functions $\pBohmReg_n$, which we will call the \emph{regularized Bohmian momenta} for the reasons below. Following our scheme, one first deduces the $x$-components of the gauge field from Eq.~\eqref{12.-generalized_bohmian_position_flow} using the definition \eqref{11.-gauge_transformations} and Eqs.~\eqref{10.-J^H},
\begin{gather}\label{12.-Bohmian_J_x}
{\delta\GBC_{x}}{=}\tfrac1{m_n}\left(\pBohmReg_n\HF''(\pp,\xx)-\KernWD_{\wwW_{\pp}'',\wwW_{\xx}''}(p_n\WF(\pp,\xx))\right).
\end{gather}
This allows to find the gauge potential ${\GPGB_n}(\pp,\xx)$ from the second of Eqs.~\eqref{11.-gauge_flows_H},
\begin{align}
{\GPGB_n}&(\pp,\xx){=}\linftyint{p_n}\delta{\GBC_{x}}\diff p{=}
\tfrac1{m_n}\big(\pBohmReg_n
\linftyint{p_n}\HF''(\pp,\xx)\diff p_n{-}\notag\\
&\linftyint{p_n}\KernWD_{\wwW_{\pp}'',\wwW_{\xx}''}\left(p_n\WF(\pp,\xx)\right)\diff p\big).
\label{12+.-gen_bohmian_gauge_potential}
\end{align}
The boundary constraints \eqref{10.-constraints_on_gauge_potentials_b} in the limit $p_n{\to}{+}\infty$ specialize the regularized Bohmian momenta $\pBohmReg_n$,
\begin{gather}\label{12.-generalized_Bohmian_momentum}
\pBohmReg_n
{=}\tfrac{\inftyint\diff p_n\KernWD_{\wwW_{\pp}'',\wwW_{\xx}''}\left(p_n\WF(\pp,\xx)\right)}{\inftyint
	\HF''(\pp,\xx)\diff p_n}.
\end{gather}
Now the gauge \eqref{12+.-gen_bohmian_gauge_potential} is fully specified, and we can compute all the components of the gauge flow field using Eqs.~\eqref{11.-gauge_flows_H}.

Of special interest is the limiting case (marked by single primes $'$) of the just-constructed hydrodynamic analogy, where
\begin{gather}\label{12.-Bohmian_limit}
\wW_{x}''{=}\wW_{x}'{\to}0,~~\wW_{p}''{=}\wW_{p}'{\to}\infty.
\end{gather}

\begin{theorem}[see \APPREF{@APP:03+} for proof]\label{12.-theorem_1D_Bohmian_transform}
	The evolution of the generalized Husimi fluid ${\HF}'(p,x){=}\left.\KernWD_{\wW'_{p},\wW'_{x}}\WF(\pp,\xx)\right.$ in the gauge \eqref{12+.-gen_bohmian_gauge_potential} in the one-dimensional case $\dimensionality{=}1$ and in the limit \eqref{12.-Bohmian_limit} is represented by the flow
	\begin{align}\label{12.-J^H_Bohm}
	{\BC_{x}}&{=}\tfrac{\pBohm(x)}m{\HF}'(p,x),~~
	{\BC_{p}}{=}{-}\tpder{(V(x){+}\Vb(x))}{x}{\HF}'(p,x),
	\end{align}
	where 
	$\pBohm(x){=}\left.\pBohmReg\right|_{\wW_{x}''{\to}0,\wW_{p}''{\to}\infty}{=}\Re\big[\tfrac{\hat p\psi(x)}{\psi(x)}\big]$ is the \emph{Bohmian momentum} and $\Vb(x){=}{-}\tfrac{\hbar^2}{2m{|\psi(x)|}}{\pder{^2}{x^2}|\psi(x)|}$ is the \emph{quantum potential} introduced in the Bohmian theory.
\end{theorem}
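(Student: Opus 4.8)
The plan is to specialise the framework of Secs.~\ref{@SEC:Fluid_analogy}--\ref{@SEC:Bohmian_mechanics} to $\dimensionality{=}1$, $\Weyl H(p,x){=}\frac{p^{2}}{2m}{+}V(x)$, a pure state $\psi(x)$, and to evaluate the double limit~\eqref{12.-Bohmian_limit} by Taylor-expanding the Husimi convolution $\KernWD_{\wW_p'',\wW_x''}$ in its two width parameters. Everything rests on three elementary $p$-moment identities for the Wigner function~\eqref{app01.-WF_of_pure_state} of a pure state: with $\mu_k(x){\equiv}\inftyint p^{k}\WF(p,x)\,\diff p$ one has $\mu_0{=}|\psi|^{2}$, $\mu_1/\mu_0{=}\Re[\hat p\psi/\psi]{\equiv}\pBohm(x)$, and (inserting $\psi{=}|\psi|e^{iS/\hbar}$, a one-line computation) $\partial_x\!\big(\mu_2{-}\mu_1^{2}/\mu_0\big){=}m\,|\psi|^{2}\,\partial_x\Vb$ with $\Vb$ exactly as in the statement. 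The last identity encodes the Bohm quantum stress and is what will make $\Vb$ appear in $\BC_p$.

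First I would dispose of the $x$-convolution. Replacing $\KernWD_{\wW_p'',\wW_x''}$ by the pure $p$-smoothing $f(p,x){\mapsto}\inftyint f(p',x)\,g_{\wW_p''}(p{-}p')\,\diff p'$, where $g_\sigma(u){\equiv}(2\pi\sigma^{2})^{-1/2}e^{-u^{2}/2\sigma^{2}}$, costs an $O\!\big((\wW_x'')^{2}\partial_x^{2}\big)$ error that vanishes as $\wW_x''{\to}0$ independently of the $p$-limit. Integrating in $p$ gives $\inftyint\HF''\,\diff p{=}\mu_0$ and $\inftyint\KernWD_{\wW_p'',\wW_x''}(p\WF)\,\diff p{=}\mu_1$, so the boundary requirement~\eqref{10.-constraints_on_gauge_potentials_b} at $p{\to}{+}\infty$, i.e., formula~\eqref{12.-generalized_Bohmian_momentum}, fixes $\pBohmReg{=}\mu_1/\mu_0{=}\pBohm(x)$. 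Because $\partial_p\Weyl H{=}p/m$ is linear in $p$, every $\sincproduct$-correction in $\WWC_x$ vanishes and $\WWC_x{=}\WF\,p/m$; combined with Eqs.~\eqref{12.-generalized_bohmian_position_flow} and~\eqref{12.-Bohmian_J_x} and~\eqref{10.-J^H} this immediately gives $\GBC_x{=}\tfrac{\pBohmReg}{m}\HF''{\to}\BC_x{=}\tfrac{\pBohm}{m}\HF'$, the first equality of~\eqref{12.-J^H_Bohm}.

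For the momentum flow $\GBC_p{=}\KernWD_{\wW_p'',\wW_x''}\WWC_p{-}\partial_x\GPGB$ I would expand in the broad-kernel limit $\wW_p''{\to}\infty$. Writing $g_{\wW_p''}(p{-}p'){=}\sum_{k\ge0}\tfrac{(-p')^{k}}{k!}g_{\wW_p''}^{(k)}(p)$ and collecting moments, $\HF''{=}\sum_k\tfrac{(-1)^{k}}{k!}g_{\wW_p''}^{(k)}\mu_k$ and $\KernWD_{\wW_p'',\wW_x''}(p\WF){=}\sum_k\tfrac{(-1)^{k}}{k!}g_{\wW_p''}^{(k)}\mu_{k+1}$. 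Performing the $p$-integration in~\eqref{12+.-gen_bohmian_gauge_potential} (which turns $g_{\wW_p''}^{(k)}$ into $g_{\wW_p''}^{(k-1)}$ for $k\ge1$ and $g_{\wW_p''}^{(0)}$ into the Gaussian distribution function), the $k{=}0$ contributions to $\GPGB$ cancel since $\pBohmReg\mu_0{=}\mu_1$, the $k{=}1$ contribution equals $\tfrac1m\,g_{\wW_p''}(p)\,\big(\mu_2{-}\mu_1^{2}/\mu_0\big)$, and the $k\ge2$ contributions carry factors $g_{\wW_p''}^{(k-1)}/g_{\wW_p''}{=}O\!\big((\wW_p'')^{-2}\big)$ that vanish. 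Hence $\partial_x\GPGB{\to}\tfrac1m\,g_{\wW_p''}(p)\,\partial_x\!\big(\mu_2{-}\mu_1^{2}/\mu_0\big){=}g_{\wW_p''}(p)\,|\psi|^{2}\,\partial_x\Vb$ by the third moment identity. In parallel, $\WWC_p{=}{-}\WF\sincproduct V'(x){=}{-}\sum_{n\ge0}\tfrac{(-1)^{n}}{(2n+1)!}\big(\tfrac{\hbar}{2}\big)^{2n}\partial_p^{2n}\WF\,\partial_x^{2n}V'$ (only the $\partial_p{\otimes}\partial_x$ part of $\sincproduct$ acts on the $x$-only factor $V'$), and $\KernWD_{\wW_p'',\wW_x''}$ replaces $\partial_p^{2n}\WF$ by $\partial_p^{2n}\HF''$, which is suppressed by $(\wW_p'')^{-2n}$ relative to $\HF''$, so $\KernWD_{\wW_p'',\wW_x''}\WWC_p{\to}{-}V'(x)\HF'$. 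Since finally $\HF'{\to}|\psi(x)|^{2}\,g_{\wW_p''}(p)$ at leading order, dividing gives $\GBC_p/\HF'{\to}\BC_p/\HF'{=}{-}V'(x){-}\partial_x\Vb{=}{-}\partial_x(V{+}\Vb)$, the second equality of~\eqref{12.-J^H_Bohm}; the common $p$-profile $g_{\wW_p''}(p)$ shared by $\BC_p$, $\BC_x$ and $\HF'$ cancels in the trajectory equations~\eqref{10.-Husimi_trajectories}. No undetermined $x$-dependent integration constant survives in $\BC_p$: it is precisely the constant that the boundary condition~\eqref{10.-constraints_on_gauge_potentials_b} removes when pinning $\pBohmReg$ to~\eqref{12.-generalized_Bohmian_momentum}.

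The hard part will be making the broad-kernel expansion rigorous: one must justify interchanging the Gaussian Taylor series with the $p'$-integration and, above all, bound the $k\ge2$ remainder as $o(1)$ relative to the retained $k\le1$ terms, uniformly on compact $(p,x)$-sets, while the constrained double limit $\wW_x''{\to}0$, $\wW_p''{\to}\infty$, $\wW_p''\wW_x''{>}\tfrac{\hbar}{2}$ is performed. This needs only mild decay and smoothness of $\psi$ (e.g., Schwartz class), under which all $\mu_k$ are finite and the moment series converges once $\wW_p''$ exceeds the momentum spread of $\psi$; alternatively the estimate can be carried out with the differential representation $\KernWD_{\wW_p'',\wW_x''}{=}\exp\!\big(\tfrac12(\wW_p'')^{2}\partial_p^{2}{+}\tfrac12(\wW_x'')^{2}\partial_x^{2}\big)$ recalled after~\eqref{10.-HF(gen)_definition}.
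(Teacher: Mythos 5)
Your proposal is correct and its logical skeleton is the one the paper follows: (i) pin $\pBohmReg$ to $\pBohm=\mu_1/\mu_0$ via the $p\to+\infty$ boundary condition, (ii) show the gauge term contributes $-\partial_x\Vb\,\HF'$, (iii) show the convolved default momentum flow collapses to $-\partial_xV\,\HF'$ because the $\sincproduct$-corrections are pure $p$-derivatives killed by the broad $p$-smoothing, while $\BC_x$ is fixed by construction once $\pBohmReg\to\pBohm$. The technical route differs in a genuinely useful way. The paper evaluates $\GPGB$ by taking the kernel to its singular limit $\delta(x-x')\,\HF'(p,x)/|\psi(x)|^2$, converting $\int_{-\infty}^{p}\diff p$ into a principal-value factor $i\hbar/x'$, and Taylor-expanding $\psi^*(x-\tfrac y2)\psi(x+\tfrac y2)$ in $y$ under $\mathrm{P.V.}\!\int\delta(y)/y$; this yields the closed-form gauge potential \eqref{app02.-Bohmian-Husimi_gauge_potential}, which is what Appendix~F reuses for the regularized (finite-width) case. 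You instead expand the broad Gaussian kernel in $p$-moments $\mu_k(x)=\inftyint p^k\WF(p,x)\,\diff p$ of the Wigner function, observe that the $k{=}0$ terms cancel by the very definition of $\pBohmReg$, that the surviving $k{=}1$ term is the momentum variance $\mu_2-\mu_1^2/\mu_0$, and that $\partial_x(\mu_2-\mu_1^2/\mu_0)=m|\psi|^2\partial_x\Vb$; this is the same quantity as the paper's $\psi^*\overleftrightarrow{p}^{\,2}\psi/|\psi|^2-\pBohm^2$, but your variance identity makes the hydrodynamic origin of $\Vb$ as the Bohm quantum stress transparent and avoids principal-value manipulations entirely. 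Both arguments are formal to the same degree (the paper freely uses $\delta$-functions and $\mathrm{P.V.}$ integrals, you defer the uniform control of the $k\ge2$ remainder), so the rigor caveat you flag does not put you behind the paper's own standard; your observation that the $G_\sigma(p)$ (non-decaying) term is exactly what the boundary condition \eqref{10.-constraints_on_gauge_potentials_b} eliminates is a cleaner justification of Eq.~\eqref{12.-generalized_Bohmian_momentum} than the paper provides.
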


Theorem~\ref{12.-theorem_1D_Bohmian_transform} indicates that the phase space velocities
$
{\vvB}{=}{{\BBC}}/{Q'}{=}\{{-}\pder{}{x}(V(x){+}\Vb(x)),\tfrac{\pBohm(x)}m\}
$
are independent of $p$. Thence, it follows from  Eqs.~\eqref{10.-Husimi_trajectories} and \eqref{12.-generalized_Bohmian_momentum} that
\begin{gather}\label{12.-Bohmian_trajectories}
\tDer{}t\pBohm{=}{-}\tpder{}{x}(V(x){+}\Vb(x)),~~\tDer{}tx{=}\tfrac{\pBohm(x)}m.
\end{gather}
Furthermore, one can verify by direct substitution that
\begin{gather}\label{12.-Bohmian_wavefunction}
\psi(x){=}|\psi(x)|\exp({\tfrac i{\hbar}\linftyint{x}\pBohm(x)\diff x}).
\end{gather}
Equations \eqref{12.-Bohmian_trajectories} and \eqref{12.-Bohmian_wavefunction} fully specify the evolution of pure state $\ket{\psi}$ in terms of $|\psi(x)|{\propto}\sqrt{\inftyint{\HF}'(p,x)\diff p}$ and $\pBohm$. They reproduce the familiar equations of the Bohmian mechanics, the oldest quantum hydrodynamic formalism introduced by Madelung and de Broglie in 1927 \cite{1927-Madelung,BOOK-Bacciagaluppi}.

The physical meaning of the Bohmian gauge specialized by theorem \ref{12.-theorem_1D_Bohmian_transform} can be clarified on the example of the quantum harmonic oscillator in Fig.~\ref{@FIG:intro}. Panels \subref*{@FIG:intro(Husimi)} and \subref*{@FIG:intro(Bohm)} of Fig.~\ref{@FIG:intro} compare two fluid representations of the oscillator's ground state $\ket{\psi_0}$. In the ``default'' gauge (panel \subref*{@FIG:intro(Husimi)}), the quantum advection, shown by black arrows, reproduces classical dynamics%
\footnote{
The Husimi flows depicted in Fig.~\ref{@FIG:intro(Husimi)} correspond to the phase space velocities $\vH_x(p,x){=}\tfrac{\KernWD_{\wW_{p},\wW_{x}}(p\WF(p,x))}{\HF(p,x)}{=}
\tfrac12 p$, and $\vH_p(p,x)=\frac{\KernWD_{\wW_{p},\wW_{x}}({-}x\WF(p,x))}{\HF(p,x)}{=}
{-}\tfrac12 x$. Note that they differ by factor $\frac12$ compared to the case of a classical oscillator. This difference is merely due to the Gaussian blurring \eqref{10.-HF(gen)_definition}.
}.
In contrast, the flow fields vanish in the Bohmian gauge (panel~\subref*{@FIG:intro(Bohm)}) in line with the notion of the stationary state. 

The classical-like drift of the Husimi fluid (Fig.~\ref{@FIG:intro(Husimi)}) is an obstacle for simulating quantum tunneling effects as in the case of the double-well potential shown in Fig.~\ref{@FIG:traj}. Due to quantum advection in the ``default'' gauge of Fig.~\ref{@FIG:traj(Husimi)}, a dense sampling of the entire phase space is needed to capture a small fraction of trajectories crossing the barrier. In contrast, tunneling trajectories in the Bohmian gauge, Fig.~\ref{@FIG:traj(Bohm)}, originate from a localized region near the barrier permitting a local dense sampling. This very property of the Bohmian mechanics has caught the eyes of applied physicists and chemists \cite{2016-Gu,2019-Garashchuk}, leading to a recent remarkable progress in modelling complex many-body systems \cite{2019-Larder}. The Bohmian formalism found a variety of applications \cite{BOOK-Holland,BOOK-Oriols,2014-Benseny,BOOK-Sanz} including studies of molecular processes (e.g., reactive scattering and nonadiabatic vibronic dynamics \cite{2001-Wyatt,2013-Curchod,2017-Min}), elastic collisions \cite{2012-Efthymiopoulos}, and quantum measurements \cite{2004-Durr}. 
Parallels between the Bohmian mechanis and the classical dynamical effects in silicone oil have also been drawn \cite{2015-Milewski}.

However, the transition to the Bohmian gauge is 
not invertible for a generic mixed quantum state: The singular convolution $\KernWD_{\wW_{p}'\to\infty,\wW_{x}'\to0}$ in Eq.~\eqref{10.-HF(gen)_definition} irrecoverably destroys the momentum information in the original Wigner function. Also, the Bohmian fluid spreads over an infinite phase space area (see Fig.~\ref{@FIG:traj(Bohm)}). The associated natural coherent states \eqref{10.-CS_definition}, following from the relation \eqref{10.-HF_definition}, collapse to a set of $\delta$-functions with undefined parameters $\pc_k$. Such a singular basic is not suitable for numerical simulations.

In order to circumvent all these issues, we propose \emph{the regularized Bohmian transformation}, where the gauge \eqref{12+.-gen_bohmian_gauge_potential} is applied to the localized Husimi function $\HF''{=}\KernWD_{\wwW''_{\pp},\wwW''_{\xx}}\WF$ constructed using finite width parameters $\wwW''{\ne}0,\infty$. This gauge inherits the aforementioned advantages of the Bohmian mechanics -- vanishing flows for steady states (Fig.~\ref{@FIG:intro(RBG)}) and the localized origins of tunneling trajectories in the phase space (Fig.~\ref{@FIG:traj(RBG)}). Moreover, compared to Fig.~\ref{@FIG:traj(Bohm)}, the resulting effective fluid trajectories in Fig.~\ref{@FIG:traj(RBG)} are smooth and completely free of the $\delta$-like singularities.

\begin{figure}[htp]
\def\spheight{0.175\textwidth}
\subfloat[``Default'' gauge $\GPH{=}0$ ($\wW_{p}{=}\wW_{x}{=}1/\sqrt 2$)
\label{@FIG:traj(Husimi)}
]{
\includegraphics[height=\spheight]{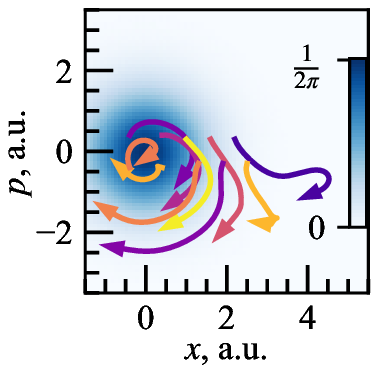}
}
\subfloat[Bohmian gauge ($\wW_{p}/\wW_{x}{\to}\infty$)
\label{@FIG:traj(Bohm)}
]{
\includegraphics[height=\spheight]{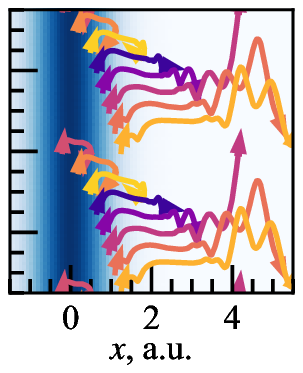}
}
\subfloat[Regularized Bohmian gauge ($\wW_{p}{=}\wW_{x}{=}1/\sqrt 2$)
\label{@FIG:traj(RBG)}
]{
\includegraphics[height=\spheight]{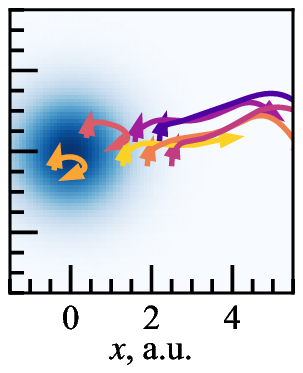}
}
\caption{The three fluid representations of the qunatum dynamics in the double-well potential $\hat V^{\idx{dw}}{=}\frac{1}{50} (\hat x{-}5)^2 \hat x^2$: \protect\subref{@FIG:traj(Husimi)} the ``default'' Husimi gauge (Eqs.~\eqref{10.-J^W} and \eqref{10.-J^H}); \protect\subref{@FIG:traj(Bohm)} the Bohmian gauge; \protect\subref{@FIG:traj(RBG)} the regualrized Bohmian gauge. The Husimi function $\HF$ of the initial state $\left.\scpr{x}{\psi}\right|_{t{=}0}{=}\sqrt[4]{\pi}e^{-x^2/2}$ is in blue. Colored curves depict the trajectories of effective fluid parcels.
\label{@FIG:traj}}
\end{figure}

\section{Numerical example\label{@SEC:Numerical_example}}
During last two decades, notable progress has been achieved in simulating quantum dynamics directly in the Wigner \cite{2015-Cabrera,2016-Bondar} or Husimi representations, especially thanks to the Martens group
\cite{2001-Donoso,2006-Lopez,2009-Wang,2012-Wang}. However, such simulations are inherently challenging. A particular issue is vanished quantum interference (and hence, the phase information) in the Husimi picture (see \APPREF{@APP:08-WF&HF-an_example} for details and an example). We argue that this issue can be turned into a numerical advantage when combined with the CCS- and FMS-type quantum chemistry methods \cite{2016-Vacher} to solve the Schr\"odinger equation $i\hbar\tpder{}t\ket{\psi}{=}\hat H\ket{\psi}$ via the anzatz
\begin{gather}\label{01.-cs-anzatz}
\ket{\psi(t)}{=}\textstyle{\sum_{k=1}^{\basissize}}\ac_k(t)
\scs{\ppc_k(t),\xxc_k(t)}~~~(\ac_k{\in}\complexes),
\end{gather} 
constructed from the time-dependent squeezed coherent states \eqref{10.-CS_definition}. It is shown in \APPREF{@APP:02} that the trajectories $\{\ppc_k(t),\xxc_k(t)\}$ of the centers of the basis states in the CCS method semiclassically approximate the Wigner flow lines defined by Eqs.~\eqref{10.-J^W}. Thus, we propose to evolve $\{\ppc_k(t),\xxc_k(t)\}$ with the flow of the generalized Husimi fluid $\HF{=}\KernWD_{\frac{\hbar}{\sqrt2{\wwc}},\frac{\wwc}{\sqrt2}}\WF$ defined by Eqs.~\eqref{10.-Husimi_trajectories}: 
\begin{gather}\label{13.-dpc/dt,dxc/dt}
\tder{}t\ppc_k{=}\vvH_{\pp}(\ppc_k,\xxc_k),~ \tder{}t\xxc_k{=}\vvH_{\xx}(\ppc_k,\xxc_k),~\vvH{=}\tfrac{\HHC}{\HF}.
\end{gather}
Here the width parameters $\wwW_{\pp}{=}\alpha{\hbar}/({\sqrt2\wwc})$ and 
$\wwW_{\xx}{=}\alpha{\wwc}/{\sqrt2}$ are chosen to match the spread of the basis states \eqref{10.-CS_definition} (see Eq.~\eqref{10.-HF_definition}), and $\alpha{=}1.1{>}1$ is chosen to ensure that $Q$ has no zeros. Importantly, in \APPREF{@APP:04} we reduce the rhs of Eqs.~\eqref{13.-dpc/dt,dxc/dt} to computationally cheap closed-form expressions applicable to generic multidimensional quantum systems with Hamiltonian \eqref{02'.-Hamiltonian}. Further numerical details can be found in \APPREF{@APP:06}.

As a specific test, we compute the solution to the one-dimensional Schr\"{o}dinger equation in a challenging case of tunneling through a high and narrow potential barrier accompanied by multiple scattering. The quantum Hamiltonian and initial wavefunction $\psi(x,t{=}0)$ in the dimensionless units $m{=}\hbar{=}1$ read
\begin{gather}\label{12.-model_problem_tunneling}
\hat H{=}\tfrac{\hat p^2}2{+}\tfrac{\hat x^2}2{+}25e^{-\left(\frac{\hat{x}}{0.35}\right)^2},~~
\psi(x,t{=}0)\propto e^{{-}\frac{(x-4)^2}2}.
\end{gather}
The results, summarized in Fig.~\ref{@FIG:W02}, show that the accuracy of simulating weak tunneling effects improves significantly if  the ``default'' Husimi gauge $\GPH{=}0$ (green curves) is replaced by the regularized Bohmian gauge (blue curves). Moreover, the latter gauge also delivers substantially better long-term estimates for both the shape of the quantum state and the tunneling probabilities than the standard CCS method \cite{2008-Shalashilin} (black curves).

\begin{figure}[htp]
\centering
\includegraphics[width=1\columnwidth]{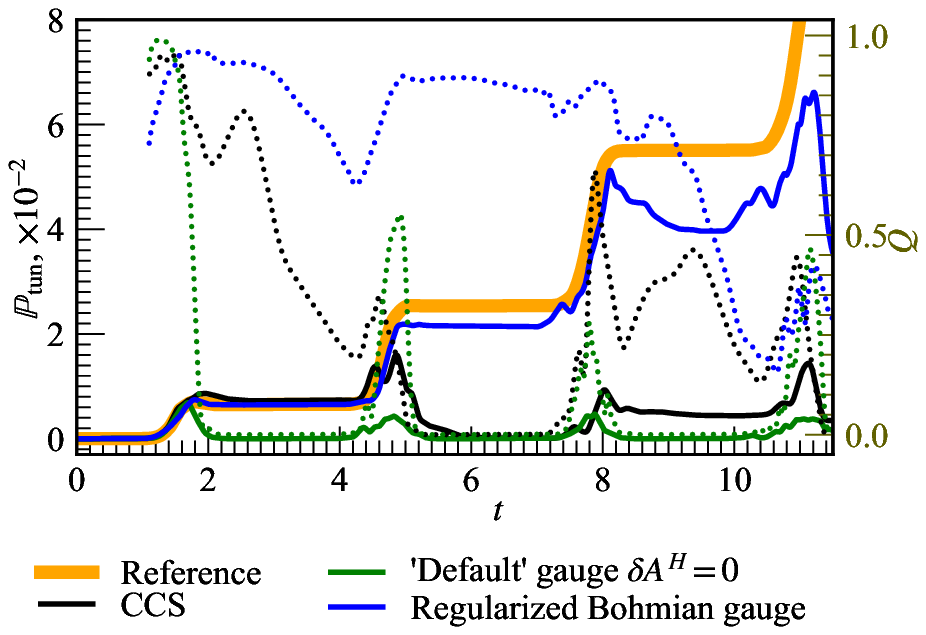}
\caption{The time-dependent tunneling probabilities $\tunprob{=}\matel{\psi}{h(\hat x)}{\psi}$ (solid curves, the left scale) for the problem \eqref{12.-model_problem_tunneling}, where $h(x)$ is the Heaviside step function. Also shown (dotted curves, the right scale) is the quality factor $Q{=}\frac{\left|\matel{\psiso}{h(\hat x)}{\psi}\right|^2}{\matel{\psiso}{h(\hat x)}{\psiso}\matel{\psi}{h(\hat x)}{\psi}}$, which relates the tunneled fraction of wavefunction $\psi(t)$ to the exact reference solution $\psiso(t)$ (orange) obtained via the standard split operator technique \cite{1983-Kosloff} on a grid. Identical initial bases of 21 coherent states \eqref{10.-CS_definition} are used in all the simulations (for details see \APPREF{@APP:06}). The link to source code can be found in Ref.~\cite{SOURCE_CODE}.\label{@FIG:W02}} 
\end{figure}

\section{Outlook}
The demonstrated utility of our theory calls for revisiting other applications of phase space methods in quantum mechanics and beyond. For instance, an open question is the optimal Skodje's gauge for examining quantum processes via recently proposed inventive use of the dynamical systems theory \cite{2013-Steuernagel,2013-Mason}. Exploring the fluid analogies to other initial value representations, such as G-MCTDH \cite{1999-Burghardt,BOOK-G-MCTDH}, could help to improve their numerical stability. The close ties between the phase space approaches to signal processing and quantum mechanics (see, e.g., Refs.~\cite{1989-Cohen,1966-Cohen,2011-Rojas}) make the developed methodology transferable to the time-frequency analysis of propagating signals in engineering applications. Connections with the gauge-dependent trajectory-based methods relying on the Voronoi tessellation \cite{2010-Coffey,2011-Coffey} also deserve a detailed exploration. Finally, we believe that our results blur the boundary between classical and quantum worlds by shedding new light on the controversial interpretation of trajectories in quantum mechanics \cite{BOOK-Penrose,2005-Leggett} and offer new ways to construct quantum-classical hybrid models \cite{2019-Bondar,2019-Gay-Balmaz}. 

\begin{acknowledgments}
~D.~Zh.~thanks Francisco Gonz\'alez Montoya for fruitful discussions and relevant advises. D.I.B. was supported by AFOSR (grant FA9550-16-1-0254), ARO (grant W911NF-19-1- 0377), and DARPA (grant D19AP00043). The views and conclusions contained in this document are those of the authors and should not be interpreted as representing the official policies, either expressed or implied, of AFOSR, ARO, DARPA, or the U.S. Government. The U.S. Government is authorized to reproduce and distribute reprints for Government purposes notwithstanding any copyright notation herein.
\end{acknowledgments}

\appendix

\section{Proof of theorem~\texorpdfstring{\ref{10.-theorem_continuity_WF}}{}\label{@APP:07-theorem_continuity_WF_proof}} 
Our starting point is the Liouville-von Neumann equation~\eqref{app01.-quantum_Liouville_equation}. Unlike the classical Liouville equation, Eq.~\eqref{app01.-quantum_Liouville_equation} does not preserve the phase space volume. However, it preserves the normalization \begin{gather}\label{app01.-WF_normalization}
\inftyints\diff^{\dimensionality}\pp\diff^{\dimensionality}\xx\WF(\pp,\xx){=}\Tr[\hat{\rho}]{=}1.
\end{gather} (The latter identity can be verified using Eq.~\eqref{app01.-Wigner_function}.) Hence, it should be possible to cast Eq.~\eqref{app01.-quantum_Liouville_equation} into the form of a continuity equation
\begin{gather}\label{app01.-WF_continuity_equation}
\tpder{}t\WF={-}\mathbb{\nabla}\cdot\WWC,
\end{gather}
where $\mathbb{\nabla}{=}\{\pder{}{\pp},\pder{}{\xx}\}$ and $\WWC{=}\{\WWC_{\pp},\WWC_{\xx}\}$ are the Wigner flows. In order to find  $\WWC_{\pp}$ and $\WWC_{\xx}$, we will need the following lemma:
\begin{lemma}\label{app07.-lemma_Moyal_bracket}
The Moyal bracket \eqref{app01.-Moyal_bracket} of two Weyl symbols $\Weyl A$ and $\Weyl B$ can be written as
\begin{gather}
\{\{\Weyl A,\Weyl B\}\}{=}\tpder{}{\xx}(\Weyl A\sincproduct\tpder{\Weyl B}{\pp}){-}\tpder{}{\pp}(\Weyl A\sincproduct\tpder{\Weyl B}{\xx}),
\end{gather}
where $\sincproduct$ denotes the binary operation 
\begin{gather}\label{app07.-sinc_product}
\sincproduct{=}\sinc\big(\tfrac{\hbar}2(\pderl{\xx}{\cdot}\pderr{\pp}-\pderl{\pp}{\cdot}\pderr{\xx})
\big).
\end{gather}
\end{lemma}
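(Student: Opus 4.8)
The plan is to reduce the statement to the elementary identity $\sin z{=}z\,\sinc z$ combined with the fact that an ordinary partial derivative commutes through any constant-coefficient bidifferential operator, in particular through $\sincproduct$. Introduce the shorthand $\mathcal{D}{=}\pderl{\xx}{\cdot}\pderr{\pp}{-}\pderl{\pp}{\cdot}\pderr{\xx}$ for the bidirectional Poisson operator, so that $\sincproduct{=}\sinc(\tfrac{\hbar}{2}\mathcal{D})$ and, by \eqref{app01.-Moyal_bracket}, $\{\{\Weyl A,\Weyl B\}\}{=}\tfrac{2}{\hbar}\,\Weyl A\,\sin(\tfrac{\hbar}{2}\mathcal{D})\,\Weyl B$. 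First I would rewrite the operator on the right with $\sin z{=}z\,\sinc z$: since $\mathcal{D}$ commutes with every function of $\mathcal{D}$, one has $\tfrac{2}{\hbar}\sin(\tfrac{\hbar}{2}\mathcal{D}){=}\mathcal{D}\,\sinc(\tfrac{\hbar}{2}\mathcal{D})$, whence $\{\{\Weyl A,\Weyl B\}\}{=}\Weyl A\,\mathcal{D}\,\sincproduct\,\Weyl B$, meaning that all left-pointing arrows act on $\Weyl A$ and all right-pointing ones on $\Weyl B$.

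Second, I would expand the right-hand side of the asserted identity. Since $\sincproduct$ is a (formal) power series in the mutually commuting operators $\overleftarrow{\partial}$ and $\overrightarrow{\partial}$, it obeys the ordinary Leibniz rule $\tpder{}{y}(f\sincproduct g){=}(\tpder{f}{y})\sincproduct g{+}f\sincproduct(\tpder{g}{y})$ for any coordinate $y$. Applying this with $y{=}x_n$ to $\Weyl A\sincproduct\tpder{\Weyl B}{p_n}$ and with $y{=}p_n$ to $\Weyl A\sincproduct\tpder{\Weyl B}{x_n}$, summing over $n$ and subtracting, the ``diagonal'' second-derivative contributions $\Weyl A\sincproduct(\partial_{x_n}\partial_{p_n}\Weyl B)$ cancel pairwise because mixed partials commute, so that
\[
\tpder{}{\xx}{\cdot}\left(\Weyl A\sincproduct\tpder{\Weyl B}{\pp}\right){-}\tpder{}{\pp}{\cdot}\left(\Weyl A\sincproduct\tpder{\Weyl B}{\xx}\right){=}\sum_{n}\left((\tpder{\Weyl A}{x_n})\sincproduct(\tpder{\Weyl B}{p_n}){-}(\tpder{\Weyl A}{p_n})\sincproduct(\tpder{\Weyl B}{x_n})\right).
\]

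Third, I would identify this last sum with $\Weyl A\,\mathcal{D}\,\sincproduct\,\Weyl B$: because all the derivatives in sight commute, $(\tpder{\Weyl A}{x_n})\sincproduct(\tpder{\Weyl B}{p_n}){=}\Weyl A\,\pderl{x_n}\pderr{p_n}\,\sincproduct\,\Weyl B$, and summing over $n$ together with the analogous $x_n{\leftrightarrow}p_n$ term rebuilds precisely the prefactor $\sum_n(\pderl{x_n}\pderr{p_n}{-}\pderl{p_n}\pderr{x_n}){=}\mathcal{D}$ standing in front of $\sincproduct$. Comparison with the expression from the first step then proves the lemma.

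The only delicate point I anticipate is the bookkeeping of the left/right ($\overleftarrow{\partial}$ versus $\overrightarrow{\partial}$) directions together with the interchange of plain derivatives and the bidifferential operator $\sincproduct$. This is immediate once $\star$ and $\sincproduct$ are read as formal power series in $\hbar$ -- the standard viewpoint in deformation quantization -- and, if genuine convergence is desired, it holds automatically for the polynomial Weyl symbols admitted by \eqref{app01.-Hermitian_operator_assumptions}, for which every such series terminates after finitely many terms.
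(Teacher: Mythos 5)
Your proposal is correct and follows essentially the same route as the paper's own proof: expand the divergence terms by the Leibniz rule (the paper writes this as $\pderl{}+\pderr{}$), cancel the $\Weyl A\sincproduct(\partial_{x_n}\partial_{p_n}\Weyl B)$ contributions by equality of mixed partials, and reassemble the surviving terms into $\Weyl A\,(\pderl{\xx}{\cdot}\pderr{\pp}-\pderl{\pp}{\cdot}\pderr{\xx})\sincproduct\Weyl B$, which equals the Moyal bracket via $z\,\sinc z=\sin z$. Your added remarks on interpreting the series formally (or noting termination for polynomial symbols) are a reasonable supplement but not a departure from the paper's argument.
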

\begin{proof}
\begin{align*}
\tpder{}{\xx}&(\Weyl A\sincproduct\tpder{\Weyl B}{\pp}){-}\tpder{}{\pp}(\Weyl A\sincproduct\tpder{\Weyl B}{\xx}){=}\\
&(\Weyl A(\pderl{\xx}{+}\cancel{\pderr{\xx}})\sincproduct\tpder{\Weyl B}{\pp}){-}(\Weyl A(\pderl{\pp}{+}\cancel{\pderr{\pp}})\sincproduct\tpder{\Weyl B}{\xx}){=}\\
&\Weyl A\left(\pderl{\xx}\pderr{\pp}{-}\pderl{\pp}\pderr{\xx}\right)\sincproduct\Weyl B{=}\{\{\Weyl A,\Weyl B\}\}.
\end{align*}
\end{proof}

\begin{proof}[Proof of theorem \ref{10.-theorem_continuity_WF}]
	Using the correspondence rule \eqref{app01.-Moyal_correspondence_principle} and the definition \eqref{app01.-Moyal_bracket} of the Moyal bracket, one can convert a master equation 
	\begin{gather*}\label{app07.-Lindblad_equation}
	\tpder{}{t}\hat\rho{=}\tfrac{i}{\hbar}[\hat\rho,\hat H]{+}\tfrac12\sum_{k}\big([\hat L_k(t){\hat\rho},\hat L^{\dagger}_k(t)]{+}\mbox{h.c.}\big)
	\end{gather*}
	\noindent into the Wigner representation as
	\begin{gather}\label{app07.-Lindblad_equation_WF}
	\tpder{}{t}\WF{=}\{\{\Weyl{H},\WF\}\}{-}\tfrac{i\hbar}2\sum_{k}\big(\{\{\Weyl L_k(t){\star}\WF,\Weyl L^{*}_k(t)\}\}{+}\mbox{c.c.}\big).
	\end{gather}
	Using Lemma~\ref{app07.-lemma_Moyal_bracket}, Eq.~\eqref{app07.-Lindblad_equation_WF} can be further rewritten as
	\begin{align}
	\tpder{}{t}\WF{=}&{-}\tpder{}{\xx}(\WF\sincproduct\tpder{\Weyl H}{\pp}){+}\tpder{}{\pp}(\WF\sincproduct\tpder{\Weyl H}{\xx}){+}\notag\\
	&\tfrac{i\hbar}2\sum_{k}\bigg(
	\tpder{}{\xx}\big((\Weyl L_k(t){\star}\WF)\sincproduct\tpder{\Weyl L^{*}_k(t)}{\pp}\big){-}\notag\\
	&\tpder{}{\pp}\big((\Weyl L_k(t){\star}\WF)\sincproduct\tpder{\Weyl L^{*}_k(t)}{\xx}\big)
	{+}\mbox{c.c.}\bigg){=}\notag\\
	&{-}\tpder{}{\pp}{\cdot}\WWC_{\pp}{-}\tpder{}{\xx}{\cdot}\WWC_{\xx},
	\end{align}
	where finally
	\begin{subequations}\label{app01.-J^W}
		\begin{align}
		\WWC_{\pp}{=}&{-}\WF\sincproduct\tpder{\Weyl H}{\xx}{+}\tfrac{i\hbar}2\sum_{k}\bigg(\big((\Weyl L_k{\star}\WF)\sincproduct\tpder{\Weyl L^{*}_k}{\xx}\big){+}\notag\\
		&\big(\tpder{\Weyl L_k}{\xx}{\sincproduct}(\WF\star\Weyl L^{*}_k)\big)\bigg),\label{app01.-J^W_p}
		\end{align}
		\begin{align}
		\WWC_{\xx}{=}&\WF\sincproduct\tpder{\Weyl H}{\pp}{-}\tfrac{i\hbar}2\sum_{k}\bigg(\big((\Weyl L_k{\star}\WF)\sincproduct\tpder{\Weyl L^{*}_k}{\pp}\big){+}\notag\\
		&\big(\tpder{\Weyl L_k}{\pp}{\sincproduct}(\WF\star\Weyl L^{*}_k)\big)\bigg).\label{app01.-J^W_x}
		\end{align}
	\end{subequations}
\end{proof}

Let us make a remark about the important special case of a closed system with the Hamiltonian
\begin{gather}\label{app01.-separable_Hamiltonian}
\hat H{=}H(\hat{\pp},\hat{\xx}) = \sum_n\tfrac{\hat p_n^2}{2 m_n}{+}V(\hat{\xx}).
\end{gather}
This Hamiltonian \eqref{app01.-separable_Hamiltonian} has the separable form \eqref{app01.-separable_operator}, so that $\Weyl H{=}H(\pp,\xx)$. By substituting this Weyl symbol into Eqs.~\eqref{app01.-J^W} and integrating by parts an appropriate number of times, one can show that the state-averaged Wigner flows $\midop{\WWC}{=}\inftyints\WWC(\pp,\xx)\diff^{\dimensionality}\pp\diff^{\dimensionality}\xx$ coincide with the respective classical expressions:
\begin{subequations}\label{app01.-<J^W>}
\begin{gather}
\midop{\WWC_{\pp}(\WF)}{=}{-}\inftyints\tpder{V(\xx)}{\xx}\WF(\pp,\xx)\diff^{\dimensionality}\pp\diff^{\dimensionality}\xx\\
\midop{\WWC_{x_n}(\WF)}{=}\inftyints\tfrac{p_n}{m_n}\WF(\pp,\xx)\diff^{\dimensionality}\pp\diff^{\dimensionality}\xx.
\end{gather}
\end{subequations}

\section{Proof of theorem~\texorpdfstring{\ref{10.-theorem_continuity_HF}}{}\label{@APP:09-theorem_continuity_HF_proof}} 
The normalization condition \eqref{app01.-WF_normalization} holds for the generalized Husimi function as well. Indeed, one can check that the kernel $\KernW_{\wwW_{\pp},\wwW_{\xx}}$ defined by Eq.~\eqref{10.-Husimi_kernel} 
 satisfies the property
\begin{gather}
\inftyints\KernW_{\wwW_{\pp},\wwW_{\xx}}(\pp{-}\pp',\xx{-}\xx')\diff^{\dimensionality}\pp\diff^{\dimensionality}\xx{=}1.
\end{gather}
Using this identity, one finds that
\begin{align}
\inftyints&\HF(\pp,\xx)\diff^{\dimensionality}\pp\diff^{\dimensionality}\xx{=}\notag\\
&\inftyints\KernWD_{\wwW_{\pp},\wwW_{\xx}}{\cdot}\WF(\pp,\xx)\diff^{\dimensionality}\pp\diff^{\dimensionality}\xx{=}\notag\\
&\inftyints\WF(\pp',\xx')\diff^{\dimensionality}\pp'\diff^{\dimensionality}\xx'{=}1.\label{app01.-HF_normalization}
\end{align}
Identity~\eqref{app01.-HF_normalization} implies that the generalized Husimi function should satisfy a continuity-like equation similar to Eq.~\eqref{app01.-WF_continuity_equation}. 
\begin{proof}[Proof of theorem~\ref{10.-theorem_continuity_HF}]
The explicit form of the continuity equation for the generalized Husimi function can be deduced by applying the convolution operator $\KernWD_{\wwW_{\pp},\wwW_{\xx}}$ to the both sides of Eq.~\eqref{app01.-WF_continuity_equation}:
\begin{subequations}\label{app09.-WH_continuity_equation}
\begin{align}%
\KernWD_{\wwW_{\pp},\wwW_{\xx}}\left(\tpder{}{t}\WF\right)&{=}\tpder{}{t}\HF\\\KernWD_{\wwW_{\pp},\wwW_{\xx}}\left({-}\mathbb{\nabla}\cdot\WWC\right)&{=}{-}\mathbb{\nabla}{\cdot}\left(\KernWD_{\wwW_{\pp},\wwW_{\xx}}\WWC\right),
\end{align}
\end{subequations}
where the Wigner currents are defined by Eqs.~\eqref{app01.-J^W}. In the last equality we used the fact that the convolution operator $\KernWD_{\wwW_{\pp},\wwW_{\xx}}$ can be equivalently represented a in differential form as
\begin{align}
\KernWD_{\wwW_{\pp},\wwW_{\xx}}{=}\textstyle{\prod_{n=1}^{\dimensionality}}\exp({\frac12\wW_{p_n}^2\tpder{^2}{p_n^2}{+}\frac12\wW_{x_n}^2\tpder{^2}{x_n^2}}),
\label{app09.-Husimi_transform_diff}
\end{align}
from which the identity $\KernWD_{\wwW_{\pp},\wwW_{\xx}}\mathbb{\nabla}{=}\mathbb{\nabla}\KernWD_{\wwW_{\pp},\wwW_{\xx}}$ follows. By equating the right hand side of Eqs.~\eqref{app09.-WH_continuity_equation}, one finds that
\begin{gather}\label{app09.-J^H}
\HHC_{\pp} {=} \KernWD_{\wwW_{\pp},\wwW_{\xx}}\WWC_{\pp},~~\HHC_{\xx} {=} \KernWD_{\wwW_{\pp},\wwW_{\xx}}\WWC_{\xx}.
\end{gather}
\end{proof}
For completeness, let us also re-express the right hand side of Eqs.~\eqref{app09.-J^H} for the Husimi currents solely in terms of the Husimi function $\HF(\pp,\xx)$ using the identity
\begin{gather}
\KernWD_{\wwW_{\pp},\wwW_{\xx}}^{-1}{=}\KernWD_{-i\wwW_{\pp},i\wwW_{\xx}},
\end{gather}
which directly follows from Eq.~\eqref{app09.-Husimi_transform_diff} and allows to formally write
\begin{gather}\label{app09.-inverse_husimi_transform}
\WF(\pp,\xx){=}\KernWD_{-i\wwW_{\pp},i\wwW_{\xx}}\HF(\pp,\xx).
\end{gather}
Substitution of Eq.~\eqref{app09.-inverse_husimi_transform} into Eqs.~\eqref{app09.-J^H} gives
\begin{subequations}\label{app09.-J^H-spec}
	\begin{align}\label{app09.-J^H_p-spec}
	\HHC_{\pp}{=}&\left(e^{\ad_{\sum_{n=1}^{\dimensionality}\frac12\wW_{p_n}^2\pder{^2}{p_n^2}{+}\frac12\wW_{x_n}^2\pder{^2}{x_n^2}}}\tpder{\Weyl H(\pp,\xx)}{\xx}\right)
	{\sincproduct}
	\HF(\pp,\xx){=}\notag\\
	&
	\HF(\pp,\xx)\sincproduct^{\idx{H}}\tpder{}{\xx}\left(\KernWD_{\wwW_{\pp},\wwW_{\xx}}\Weyl H(\pp,\xx)\right),
	\end{align}
	\begin{align}
	\label{app09.-J^H_x-spec}
	\HHC_{\xx}{=}&{-}\left(e^{\ad_{\sum_{n=1}^{\dimensionality}\frac12\wW_{p_n}^2\pder{^2}{p_n^2}{+}\frac12\wW_{x_n}^2\pder{^2}{x_n^2}}}\tpder{\Weyl H(\pp,\xx)}{\pp}\right)
	{\sincproduct}
	\HF(\pp,\xx){=}\notag\\
	&{-}
	\HF(\pp,\xx)\sincproduct^{\idx{H}}\tpder{}{\pp}\left(\KernWD_{\wwW_{\pp},\wwW_{\xx}}\Weyl H(\pp,\xx)\right),
	\end{align}
\end{subequations}
where the adjoint mapping notation $\ad_{\hat X}\hat Y{=}[\hat X,\hat Y]$ is used and the symbol $\sincproduct^{\idx{H}}$ denotes the compound binary operation
\begin{gather}
\sincproduct^{\idx{H}}{=}\exp\left({\sum_{n{=}1}^{\dimensionality}\left(\wW_{x_n}^2\pderl{x_n}\pderr{x_n}{+}\wW_{p_n}^2\pderl{p_n}\pderr{pn}\right)}\right)\sincproduct.
\end{gather}

\section{Proof of theorem~\texorpdfstring{\ref{12.-theorem_1D_Bohmian_transform}}{}\label{@APP:03+}}
As a preliminary step, let us specialize the definition of the gauge potential 
\eqref{12+.-gen_bohmian_gauge_potential} for the case of a pure quantum state $\ket{\psi}$. For this, we will need the following property of the convolution operator $\KernWD_{\wwW_{\pp},\wwW_{\xx}}$:
\begin{gather}\label{app03+.-lemma_integration_convolution}
\linftyint{p_n}\KernWD_{\wwW_{\pp},\wwW_{\xx}}Z(\pp,\xx)\diff p_n{=}
\KernWD_{\wwW_{\pp},\wwW_{\xx}}\left(\linftyint{p_n}Z(\pp,\xx)\diff p_n\right),
\end{gather}
which is valid for any function $Z(\pp,\xx)$ exponentially small at $|\pp|\to\infty$ and/or $|\xx|\to\infty$. With the help of Eqs.~\eqref{app03+.-lemma_integration_convolution}, \eqref{app01.-WF_of_pure_state} and the equality $p_ne^{-i\frac{\pp\cdot\xx'}{\hbar}}{=}i\hbar\tpder{}{x_n'}e^{-i\frac{\pp\cdot\xx'}{\hbar}}$ we obtain
\begin{widetext}
\begin{align}
{\GPGB_n}&(\pp,\xx){=}\linftyint{p_n}\delta{\GBC_{x}}\diff p_n{=}
\tfrac1{m_n}\big(\pBohmReg_n
\linftyint{p_n}\HF''(\pp,\xx)\diff p_n{-}\linftyint{p_n}\KernWD_{\wwW_{\pp}'',\wwW_{\xx}''}\left(p_n\WF(\pp,\xx)\right)\diff p_n\big){=}\notag\\
&\tfrac1{m_n}\big(\pBohmReg_n
\KernWD_{\wwW_{\pp}'',\wwW_{\xx}''}\linftyint{p_n}\WF(\pp,\xx)\diff p_n{-}\KernWD_{\wwW_{\pp}'',\wwW_{\xx}''}\linftyint{p_n}\left(p_n\WF(\pp,\xx)\right)\diff p_n\big){=}\notag\\
&\tfrac1{m_n}\big(\pBohmReg_n\KernWD_{\wwW_{\pp}'',\wwW_{\xx}''}\linftyint{p}\tfrac{1}{(2\pi\hbar)^{\dimensionality}}\inftyint\psi^{\ast}(\xx{-}\tfrac{\xx'}2)\psi(\xx{+}\tfrac {\xx'}2)e^{-i\frac{\pp\cdot\xx'}{\hbar}}\diff^{\dimensionality}\xx'\diff p_n{-}\notag\\
&~~~~\KernWD_{\wwW_{\pp}'',\wwW_{\xx}''}\linftyint{p_n}p_n\tfrac{1}{(2\pi\hbar)^{\dimensionality}}\inftyint\psi^{\ast}(\xx{-}\tfrac{\xx'}2)\psi(\xx{+}\tfrac {\xx'}2)e^{-i\frac{\pp\cdot\xx'}{\hbar}}\diff^{\dimensionality}\xx'\diff p_n\big){=}\notag\\
&\tfrac1{m_n(2\pi\hbar)^{\dimensionality}}\bigg(\pBohmReg_n\KernWD_{\wwW_{\pp}'',\wwW_{\xx}''}\left(\mbox{P.V.}\inftyint\psi^{\ast}(\xx{-}\tfrac{\xx'}2)\psi(\xx{+}\tfrac {\xx'}2)\tfrac{i\hbar e^{-i\frac{\pp\cdot\xx'}{\hbar}}}{x_n'}\diff^{\dimensionality}\xx'\right){+}\notag\\
&~~~~\KernWD_{\wwW_{\pp}'',\wwW_{\xx}''}\left(\mbox{P.V.}\inftyint i\hbar\tpder{}{x_n'}\left(\psi^{\ast}(\xx{-}\tfrac{\xx'}2)\psi(\xx{+}\tfrac {\xx'}2)\right)\tfrac{i\hbar e^{-i\frac{\pp\cdot\xx'}{\hbar}}}{x_n'}\diff^{\dimensionality}\xx'\right)\bigg).\label{app03a.-Bohmian-Wigner_gauge_potential_nD}
\end{align}
\end{widetext}
The last equality in \eqref{app03a.-Bohmian-Wigner_gauge_potential_nD} was obtained via the integration by parts with respect to $x_n$.
\begin{proof}[Proof of theorem~\ref{12.-theorem_1D_Bohmian_transform}]
We are now interested in the one-dimensional case $\dimensionality{=}1$ 
when the Husimi tranform is performed in the singular limit $\wW''_{p}{=}\wW'_{p}{\to}\infty$, $\wW''_{x}{=}\wW'_{x}{\to}0$%
\footnote{Hereafter we will drop the dimension subscript $n{=}1$ to simplify notations.}%
. Note that the kernel $\KernW_{\wW_{p}'',\wW_{x}''}$ defined by
 Eq.~\eqref{10.-Husimi_kernel}
 in this limit reduces to 
\begin{align}\label{app02.-Husimi_kernel_in_Bohmian_limit}
\KernW_{\wW'_{p},\wW'_{x}}(p{-}p',x{-}x'){\simeq}\tfrac{\delta(x{-}x')}{\sqrt{2\pi}\wW'_{p}}{\simeq}\delta(x-x')\tfrac{\HF'(p,x)}{|\psi(x)|^2}.
\end{align}
After substituting Eq.~\eqref{app02.-Husimi_kernel_in_Bohmian_limit} into  Eq.~\eqref{12.-generalized_Bohmian_momentum}
 one finds that 
\begin{gather}\label{app03+.-Bohmian_momentum}
\left.\pBohmReg\right|_{\wW''_{p}{\to}\infty,\wW''_{x}{\to}0}{=}\frac{\inftyint p\WF(p,x)\diff p}{\inftyint\WF(p,x)\diff p}{=}\pBohm(x),
\end{gather}
where $\pBohm(x){=}\frac{\psi(x)^{\ast}\overleftrightarrow{p}\psi(x)}{|\psi(x)|^2}$ is the conventional Bohmian momentum, here $\overleftrightarrow{p}{=}\frac{i\hbar}2(\pderl{x}{-}\pderr{x})$. 

Eqs.~\eqref{app02.-Husimi_kernel_in_Bohmian_limit} and \eqref{app03+.-Bohmian_momentum} allow to cast the gauge potential~\eqref{app03a.-Bohmian-Wigner_gauge_potential_nD} into the form
\begin{align}
{\GPB}(p,x)&{=}\tfrac{\HF'(p,x)}{|\psi(x)|^2}\mbox{P.V.}\inftyint\tfrac{i \hbar\delta(y)}{ m y}\times\notag\\%
&\left(\pBohm(x){+}i\hbar\tpder{}{y}\right)\psi^{\ast}(x{-}\tfrac y2)\psi(x{+}\tfrac y2)\diff y.
\end{align}
Further simplifications can be made by expanding $\psi^{\ast}(x{-}\tfrac y2)$ and $\psi(x{+}\tfrac y2)$ into the series of $y$:
\begin{align}\label{app01.-Bohmian_gauge_potential_simplification}
{\GPB}&(p,x){=}\tfrac{\HF'(p,x)}{|\psi(x)|^2}\mbox{P.V.}\inftyint\tfrac{i \hbar\delta(y)}{ m y}
\left(\pBohm(x){+}i\hbar\tpder{}{y}\right)\times\notag\\%
&\bigg\{\psi^{\ast}(x)\psi(x){+}
\tfrac12y\big(\psi^{\ast}(x)\tpder{\psi(x)}{x}{-}\tpder{\psi^{\ast}(x)}{x}\psi(x)\big){+}\notag\\%
&\tfrac18y^2\big(\psi^{\ast}(x)\tpder{^2\psi(x)}{x^2}{-}2\tpder{\psi^{\ast}(x)}{x}\tpder{\psi(x)}{x}{+}\tpder{^2\psi^{\ast}(x)}{x^2}\psi(x)\big){+}\notag\\
&o(y^2)
\bigg\}\diff y.
\end{align}
The terms in the curly brackets proportional to $y^r$ with $r{>}2$ do not contribute to the integral over $y$. The rest of Eq.~\eqref{app01.-Bohmian_gauge_potential_simplification} can be split into even and odd terms with respect to $y$; the odd terms also do not contribute to the integral. By collecting the even terms and performing the integration, one obtains
\begin{align}
{\GPB}&(p,x){=}\tfrac{\HF'(p,x)}{|\psi(x)|^2}
\tfrac{i\hbar}{2m}\bigg\{
\pBohm(x)\big(\psi^{\ast}(x)\tpder{\psi(x)}{x}{-}\tpder{\psi^{\ast}(x)}{x}\psi(x)\big){+}\notag\\%
&\tfrac{i\hbar}{2}\big(\psi^{\ast}(x)\tpder{^2\psi(x)}{x^2}{-}2\tpder{\psi^{\ast}(x)}{x}\tpder{\psi(x)}{x}{+}\tpder{^2\psi^{\ast}(x)}{x^2}\psi(x)\big)\bigg\}.\label{app02.-gauge-A_unprettified}
\end{align}
Using the definition of the Bohmian momentum, one can simplify Eq.~\eqref{app02.-gauge-A_unprettified}
\begin{gather}\label{app02.-Bohmian-Husimi_gauge_potential}
{\GPB}(p,x){=}\tfrac{\hbar^2}{m}{{\HF}'(p,x)}\left(\tfrac{\psi(x)^{\ast}\overleftrightarrow{p}^2\psi(x)}{|\psi(x)|^2}-\pBohm(x)^2\right).
\end{gather}

Now we are ready to compute the fluid flow vector field corresponding to the Bohmian gauge. Its momentum component reads
\begin{align}\label{app02.-Bohmian_J^H_p-definition}
{\BC_{p}}&{=}{-}\tpder{}{x}{\GPB}(p,x){+}\left.\KernWD_{\frac{\hbar}{\sqrt2\wc},\frac{\wc}{\sqrt2}}\WC_{p}(p,x)\right|_{\wc{\to}0},
\end{align}
where $\WC_{p}(p,x)$ is given by Eq.~\eqref{app01.-J^W_p}. Evaluation of the second term in Eq.~\eqref{app02.-Bohmian_J^H_p-definition} using the specific form \eqref{app02.-Husimi_kernel_in_Bohmian_limit} of the convolution kernel $\KernW_{\wW_{p}',\wW_{x}'}$ gives
\begin{align}
\KernWD&_{\frac{\hbar}{\sqrt2\wc},\frac{\wc}{\sqrt2}}\left.\WC_{p}(p,x)\right|_{\wc{\to}0}
{=}{-}\tfrac{\HF'(p,x)}{|\psi(x)|^2}\times\notag\\%
&\inftyint\WF(p,x)\sinc\big(\tfrac{\hbar}2(\pderl{x}\pderr{p}-\pderl{p}\pderr{x})\big)\tpder{V(x)}x\diff p{=}\notag\\
&{-}\tfrac{\HF'(p,x)}{|\psi(x)|^2}\inftyint\WF(p,x)\sinc\big(\tfrac{\hbar}2\pderl{p}\pderr{x}\big)\tpder{V(x)}x\diff p{=}\notag\\
&{-}\tfrac{\HF'(p,x)}{|\psi(x)|^2}\inftyint\WF(p,x)\tpder{V(x)}x\diff p{+}\tfrac{\hbar}2\tfrac{\HF'(p,x)}{|\psi(x)|^2}\times\notag\\
&\xcancel{\inftyint\tpder{}{p}\bigg(\WF(p,x)\sum_{k{=}1}^{\infty}\tfrac{\left(-\tfrac{\hbar}2\pderl{p}\pderr{x}\right)^{2k-1}}{(2k+1)!}\tpder{^2V(x)}{x^2}\bigg)\diff p}{=}\notag\\
&{-}\HF'(p,x)\tpder{V(x)}x
.
\end{align}
The first term in Eq.~\eqref{app02.-Bohmian_J^H_p-definition} can be transformed as
\begin{align}
&\tpder{}{x}{\GPB}(p,x){=}\notag\\
&\tfrac{\hbar^2}{m}\tfrac{{\HF}'(p,x)}{|\psi(x)|^2}\tpder{}{x}\left(|\psi(x)|^2\left(\tfrac{\psi(x)^{\ast}\overleftrightarrow{p}^2\psi(x)}{|\psi(x)|^2}-\pBohm(x)^2\right)\right){=}\notag\\
&\tpder{\Vb(x)}{x}{\HF}'(p,x),
\end{align}
where $\Vb(x)$ is the quantum potential defined in theorem~\ref{12.-theorem_1D_Bohmian_transform}.

Thus, Eq.~\eqref{app02.-Bohmian_J^H_p-definition} can be rewritten as
\begin{subequations}\label{app02.-Bohmian_J^H}
	\begin{align}\label{app02.-Bohmian_J^H_p}
	{\BC_{p}}&{=}{-}\HF'(p,x)\tpder{}x\left(V(x)+\Vb(x)\right).
\end{align}
The positional component of the flow can be obtained in a similar fashion using the definition of ${\BC_x}$ (see Eq.~\eqref{12.-generalized_bohmian_position_flow}
):
\begin{align}
	{\BC_{x}}&{=}\tfrac{\pBohm}m{\HF}'(p,x)
\end{align}
\end{subequations}
Equalities \eqref{app02.-Bohmian_J^H} prove the theorem.
\end{proof}

\section{Quantum superposition and entangled states in Wigner and Husimi pictures\label{@APP:08-WF&HF-an_example}}
The possibility of the Wigner function $\WF(\pp,\xx,t)$ to take negative values compromises its interpretation as a conventional probability distribution (this is why it is often referred as the \emph{quasiprobability distribution}). The emergence of negative values is more the rule than the exception. For instance, Fig.~\ref{@FIG:W01a} illustrates the Wigner function $\WF_{\psi}$ of a superposition 
\begin{gather}
\ket{\psi}{\propto}\ket{\pc_a,\xc_a}{+}\ket{\pc_b,\xc_b}.
\end{gather}
of two squeezed coherent states defined in Eq.~\eqref{01.-cs-anzatz}
. One can see that $\WF_{\psi}$ consists of three components
\begin{gather}\label{app08.-superposition_state}
\WF_{\psi}(p,x){=}P_{aa}(p,x){+}P_{bb}(p,x)+2\Re[P_{ab}(p,x)],
\end{gather}
where
\begin{gather}
P_{ab}(p,x){\propto}
\tfrac1{2\pi\hbar}\inftyint\scpr{\pc_a,\xc_a}{x{-}\tfrac{x'}2}\scpr{x{+}\tfrac{x'}2}{\pc_b,\xc_b}e^{-i\frac{px'}{\hbar}}\diff x'.
\end{gather}
The first two terms in Eq.~\eqref{app08.-superposition_state} manifest themselves in Fig.~\ref{@FIG:W01a} by two Gaussian blobs around the points $\{p_a,x_a\}$ and $\{p_b,x_b\}$. These blobs are identical to the classical probability distribution for a particle having equal likelihoods of being localized either near $\{p_a,x_a\}$ or near $\{p_b,x_b\}$. However, the last term in Eq.~\eqref{app08.-superposition_state} adds non-classical sign-changing interference fringes between these blobs. It is because of this term the phase space velocity field $\WWC(\pp,\xx)/\WF(\pp,\xx)$ is both physically and mathematically ill-defined and singular. Consequently, one cannot unambiguously define the trajectories of the elementary parcels of quantum fluid represented by the time-dependent Wigner function \cite{2018-Oliva}. Furthermore, in a general case, the overlapping fringes form a complicated landscape, which is difficult to approximate numerically.

Panels \subref{@FIG:W01a} and \subref{@FIG:W01b} of Fig.~\ref{@FIG:W01} help to compare the Wigner and  Husimi representations for the same quantum superposition state. As expected, the Husimi function $\HF(p,x)$ is everywhere non-negative. At the same time, one can see that the interference fringes are nearly entirely smeared out in the Husimi representation by the Gaussian convolution (they are nearly two orders of magnitude smaller than the dominant peaks). Consequently, the quantum superpositions with the opposite phases shown in panels \subref{@FIG:W01b} and \subref{@FIG:W01c} look nearly identical.

Similar conclusions apply to an entangled state of two quantum particles, such as
\begin{gather}\label{app08.-entangled_state}
\ket{\idx{ent}}{\propto}\ket{\pc_{1,a},\xc_{1,a}}\ket{\pc_{2,a},\xc_{2,a}}{+}\ket{\pc_{1,b},\xc_{2,b}}\ket{\pc_{1,b},\xc_{2,b}},
\end{gather}
where indices 1 and 2 enumerate the particles. The corresponding Wigner function $\WF_{\idx{ent}}$ again consists of three terms
\begin{align}
\WF_{\idx{ent}}&(\pp,\xx){=}P_{aa}(p_1,x_1)P_{aa}(p_2,x_2){+}P_{bb}(p_1,x_1)P_{bb}(p_2,x_2){+}\notag\\
&2\Re[P_{ab}(p_1,x_1)P_{ab}(p_2,x_2)].\label{app08.-WF_of_entangled_state}
\end{align}
The first two terms describe classical correlations between particles' positions and momenta. The last term in Eq.~\eqref{app08.-WF_of_entangled_state} is composed of the product of terms $P_{ab}$, which, as we have learned from Fig.~\ref{@FIG:W01a}, are non-positive and oscillatory. This additional, markedly non-classical correlation term is the phase space signature of quantum nonlocality. We refer interested readers to Ref.~\cite{1991-Venugopalan} for a thorough analysis of the quantum nonlocality (expressed as a violation of the Bell inequality) in terms of the parameters of the Wigner function. 

These illustrations show that simulating quantum dynamics directly in the Wigner and Husimi representations is challenging. Nevertheless, it is worth noting the innovative methods developed in the Martens group, which made such simulations possible for low-dimensional systems \cite{2001-Donoso,2006-Lopez,2009-Wang,2012-Wang}.

\begin{figure*}[htp]
\centering
\subfloat[
]{
\includegraphics[width=0.25\textwidth]{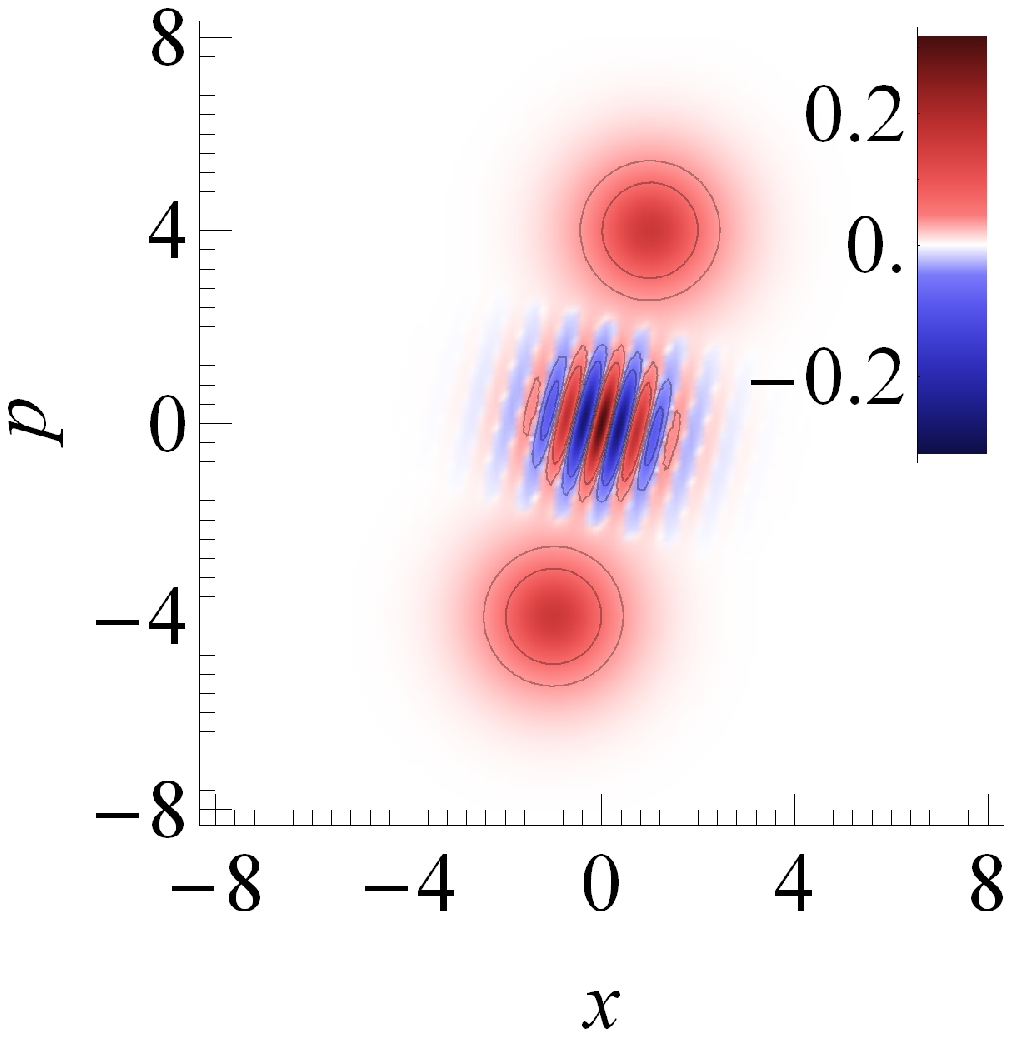}
\label{@FIG:W01a}
}
\subfloat[
]{
\includegraphics[width=0.25\textwidth]{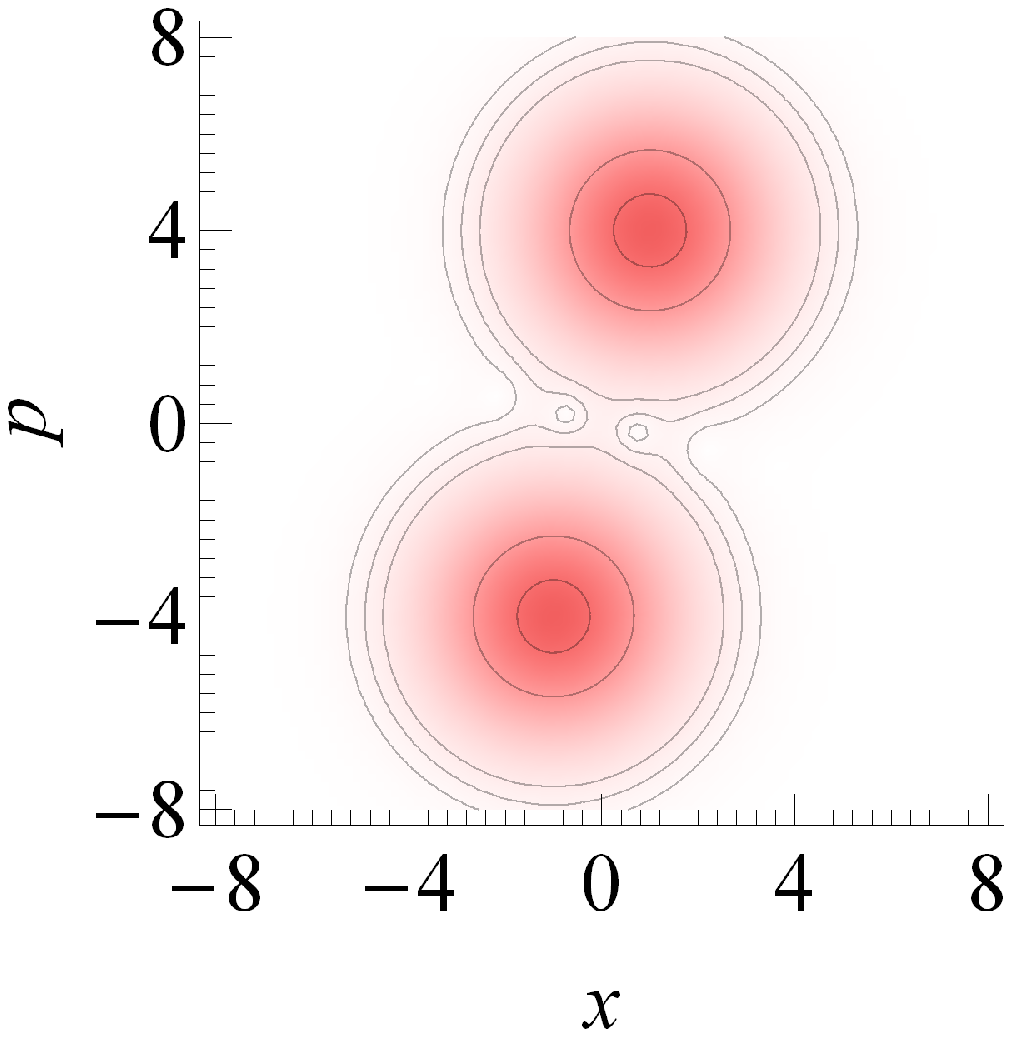}
\label{@FIG:W01b}
}
\subfloat[
]{
\includegraphics[width=0.25\textwidth]{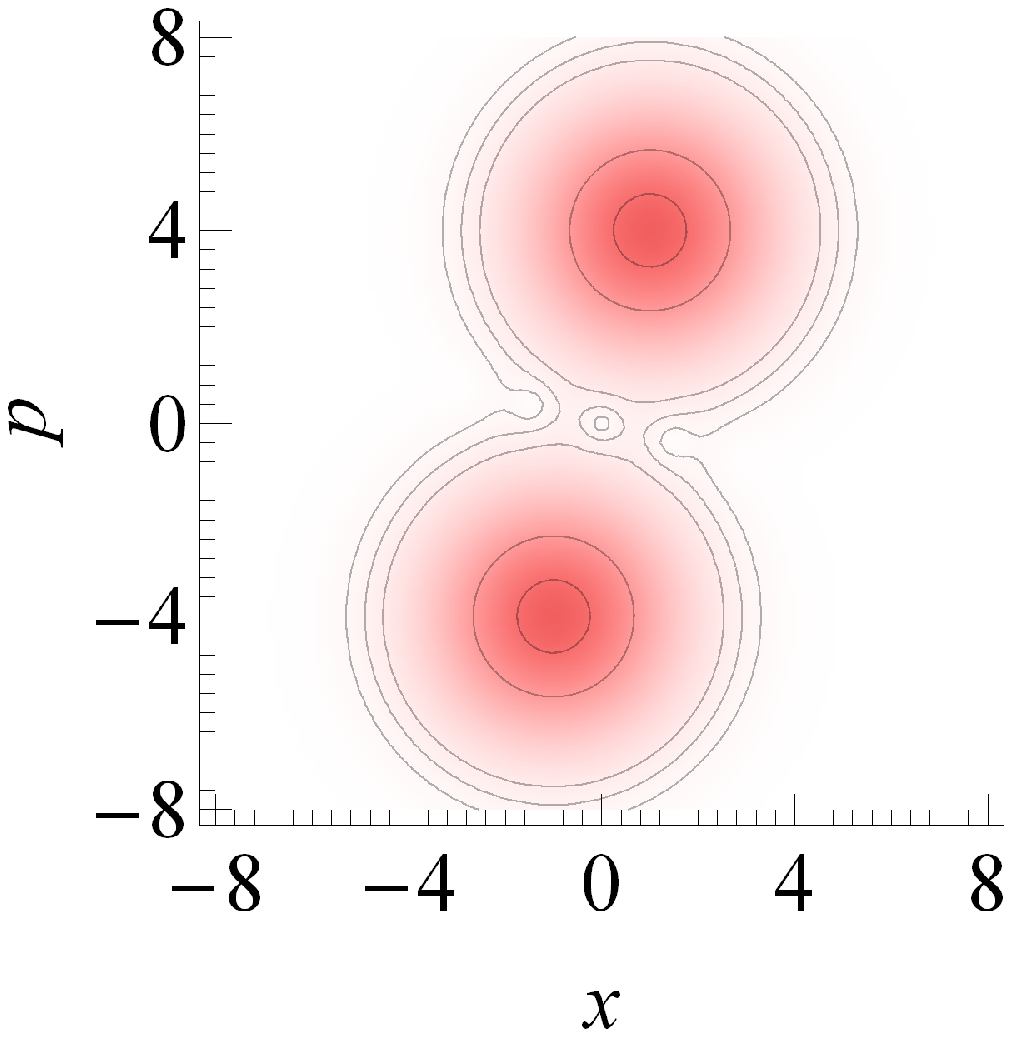}
\label{@FIG:W01c}
}
\caption{
The Wigner function defined by Eq.~\eqref{app01.-WF_of_pure_state} (panel \protect\subref{@FIG:W01a}) and the Husimi functions defined by Eq.~\protect\eqref{10.-HF_definition} 
(panels \protect\subref{@FIG:W01b} and \protect\subref{@FIG:W01c}) representing the superposition of two Gaussian states $\ket{\psi}{\propto}\left.\ket{\pc{=}4,\xc{=}1}{+}e^{i\varphi}\ket{\pc{=}{-}4,\xc{=}{-}1}\right.$. Dimensionless units where $\wc{=}\hbar{=}1$ are used. Panels \protect\subref{@FIG:W01a} and \protect\subref{@FIG:W01b} correspond to ${\varphi{=}0}$, whereas panel \protect\subref{@FIG:W01c} shows the case $\varphi{=}\pi$.\label{@FIG:W01}
}
\end{figure*}

\section{Hydrodynamic interpretation of coupled coherent states (CCS) method\label{@APP:02}}
In the CCS method, the trajectories $\{\ppc_k(t),\xxc_k(t)\}$ of centers  of basis states in the time-dependent anzatz~
\eqref{01.-cs-anzatz} are defined as the optimal trajectories minimizing the residual
\begin{gather}\label{app02.-CCS_extremal_problem}
\scpr{r_k}{r_k}\to\min|_{\ppc_k(t),\xxc_k(t)},
\end{gather}
where $\ket{r_k}{=}(\pder{}t{-}\frac{1}{i\hbar}\hat H)\scs{\ppc_k(t),\xxc_k(t)}$. In other words, the functions $\ppc_k(t)$ and $\xxc_k(t)$ provide the best fit to the solution of the Scr\"odinger equation $i\hbar\pder{}t\ket{\psi}{=}\hat H\ket{\psi}$ when using the single-Gaussian approximation $\ket{\psi}{\propto}\scs{\ppc_k(t),\xxc_k(t)}$. The solution to the problem \eqref{app02.-CCS_extremal_problem} is
\begin{subequations}\label{app02.-CCS_px_equations}
	\begin{gather}
	\tpder{}t\ppc_{k}(t){=}{-}\matel{\ppc_k(t),\xxc_k(t)}{\tpder{V(\hat{\xx})}{\hat{\xx}}}{\ppc_k(t),\xxc_k(t)},\\%
	\tpder{}t\xc_{k,n}(t){=}\matel{\ppc_k(t),\xxc_k(t)}{\tfrac{\hat p_n}{m_n}}{\ppc_k(t),\xxc_k(t)},
	\end{gather}
\end{subequations}
where the Hamiltonian of form \eqref{app01.-separable_Hamiltonian} is assumed. Straightforward computation shows that Eqs.~\eqref{app02.-CCS_px_equations} can be reduced to
\begin{gather}\label{app02.-CCS_px_equations(wigner_form)}
\tpder{}t\ppc_{k}(t){=}\midop{\WWC_{\pp}(\WF_k)},~~
\tpder{}t\xc_{k,n}(t){=}\midop{\WWC_{x_n}(\WF_k)},
\end{gather}
where $\WF_k(\pp,\xx)$ is the Wigner representation of the basis function $\scs{\ppc_k,\xxc_k}$ (i.e., it is obtained by substituting $\ket{\psi}{=}\scs{\ppc_k,\xxc_k}$ into Eq.~\eqref{app01.-WF_of_pure_state}). The Wigner flows $\WWC$ and their averaged values are defined by Eqs.~\eqref{app01.-J^W} and \eqref{app01.-<J^W>}. Thus, the CCS solutions for $\ppc_k(t)$ and $\xxc_k(t)$ allow for a simple interpretation as the trajectories guided by the averaged Wigner phase space velocities for a single basis state.

\section{Regularized Bohmian representation: Closed-form expressions for fluid velocities and fluxes\label{@APP:04}}
\subsection{Preliminaries}
Assume that a quantum state $\ket{\psi}$ is approximated using the anzatz of squeezed coherent states 
\eqref{01.-cs-anzatz}
. The respective Wigner and Husimi functions can be represented as
\begin{subequations}\label{app04.-P&Q_expansions}
\begin{align}\label{app03.-wigner_function_expansion}
\WF(\pp,\xx)&{=}\sum_{k_1{=}1}^{\basissize}\sum_{k_2{=}1}^{\basissize}\ac_{k_1}^{\ast}(t)\ac_{k_2}(t)\WF_{k_1,k_2}(\pp,\xx),\\
\HF''(\pp,\xx)&{=}\sum_{k_1{=}1}^{\basissize}\sum_{k_2{=}1}^{\basissize}\ac_{k_1}^{\ast}(t)\ac_{k_2}(t)\HF_{k_1,k_2}''(\pp,\xx),
\end{align}\label{app04.-husimi_function_expansion}
\end{subequations}
where 
\begin{widetext}
\begin{align}
\WF_{k_1,k_2}(\pp,\xx){=}&\left(\tfrac1{2\pi\hbar}\right)^{\dimensionality}\inftyints e^{-i\frac{\pp\cdot\xx'}{\hbar}}\scpr{\ppc_{k_1},\xxc_{k_1}}{\xx{-}\tfrac{\xx'}2}\scpr{\xx{+}\tfrac{\xx'}2}{\ppc_{k_2},\xxc_{k_2}}\diff^N\xx'{=}\prod_{n{=}1}^{\dimensionality}\WF_{k_1,k_2,n}(p_n,x_n),\label{app04.-P-term}\\
\WF_{k_1,k_2,n}(p_n,x_n){=}&\tfrac1{\pi  \hbar }{\exp \left({-}\tfrac{\wc_n^2}{2
		\hbar ^2}
		\left(\pc_{k_1,n}^2{+}\pc_{k_2,n}^2{+}\tfrac{2 \hbar ^2
			(x_n{-}\xcAvg{k_1}{k_2}{n})^2}{\wc_n^4}{-}2
		{\pcAvg{k_1}{k_2}{n}}^2{+}2 (p_n{-}{\pcAvg{k_1}{k_2}{n}})^2\right)\right)},
\label{app04.-P-product_term}\\
\HF_{k_1,k_2}''(\pp,\xx){=}&\KernWD_{\wwW_{\pp}'',\wwW_{\xx}''}{\cdot}\WF_{k_1,k_2}(\pp,\xx){=}\prod_{n{=}1}^{\dimensionality}\HF_{k_1,k_2,n}''(p_n,x_n),\label{app04.-Q-term}\\
\HF_{k_1,k_2,n}''(p_n,x_n){=}&\tfrac{\wc_n}{\pi\sqrt{\left(\wc_n^2{+}2
			{\varpi_{x_n}''}^2\right) \left(2 \wc_n^2
			{\varpi_{p_n}''}^2{+}\hbar ^2\right)}} \exp
	\left({-}\tfrac{\wc_n^2
		({\pcAvg{k_1}{k_2}{n}}{-}{p_n})^2}{2 \wc_n^2
		{\varpi_{p_n}''}^2{+}\hbar
		^2}{-}\tfrac{\left(\pc_{k_1,n}^2{+}\pc_{k_2,n}^2\right)
		\wc_n^2}{2 \hbar ^2}{+}\tfrac{{\pcAvg{k_1}{k_2}{n}}^2
		\wc_n^2}{\hbar
		^2}{-}\tfrac{({\xcAvg{k_1}{k_2}{n}}{-}{x_n})^2}{\wc_n^2
		{+}2 {\varpi_{x_n}''}^2}\right).
\label{app04.-Q-product_term}
\end{align}
\end{widetext}
In the above expressions we used the notations
\begin{align}
{\pcAvg{k_1}{k_2}{n}}&{=}\tfrac{1}{2} \left(\pc_{k_1,n}{+}\pc_{k_2,n}{+}\tfrac{i \hbar 
	\left(\xc_{k_1,n}{-}\xc_{k_2,n}\right)}{\wc_n^2}\right),\\
{\xcAvg{k_1}{k_2}{n}}&{=}\tfrac{1}{2} \left({-}\tfrac{i
	\left(\pc_{k_1,n}{-}\pc_{k_2,n}\right) \wc_n^2}{\hbar
}{+}\xc_{k_1,n}{+}\xc_{k_2,n}\right).
\end{align}

The following relations for $\WF_{k_1,k_2}(\pp,\xx)$ and $\HF_{k_1,k_2}''(\pp,\xx)$ will be needed in the subsequent derivations: 
\begin{align}\label{app04.-int(pP(p))dp}
\linftyint{p_n}p_n&\WF_{k_1,k_2}(\pp,\xx)\diff p_n{=}\notag\\
&{\pcAvg{k_1}{k_2}{n}}\linftyint{p_n}\WF_{k_1,k_2}(\pp,\xx)\diff p_n{+}\tfrac{\hbar^2}{2\wc_n^2}\WF_{k_1,k_2}(\pp,\xx),\\
\KernWD_{\wwW_{\pp}'',\wwW_{\xx}''}&{\cdot}\left(\linftyint{p_n}\WF_{k_1,k_2}(\pp,\xx)\diff p_n\right){=}\linftyint{p_n}\HF_{k_1,k_2}''(\pp,\xx)\diff p_n{=}\notag\\
&\JBGone{k_1}{k_2}{n}(p_n-{\pcAvg{k_1}{k_2}{n}})\HF_{k_1,k_2}''(\pp,\xx),\label{app04.-int(Q)dp}
\end{align}
where the function $\JBGone{k_1}{k_2}{n}(z)$ is defined as
\begin{gather}\label{app04.-f_n(z)}
\JBGone{k_1}{k_2}{n}(z){=}\tfrac{\sqrt{\pi }}{2 \bar{s}_n} e^{z^2 \bar{s}_n^2} \left(\erf\left(z\bar{s}_n\right){+}1\right)
\end{gather}
with $\bar{s}_n{=}\sqrt{\tfrac{\wc_n^2}{2 \wc_n^2 {\varpi_{p_n}''}^2{+}\hbar ^2}}$ ($\erf$ stands for error function).

\subsection{Regularized Bohmian momentum \texorpdfstring{$\pBohmReg_n(\ppExcl{p_n},\xx)$}{}}
Before turning to a general multidimensional case, let us demonstrate that Eq.~\eqref{12.-generalized_Bohmian_momentum} 
takes a particularly simple form for a pure state $\ket{\psi}$ in the one-dimensional case $\dimensionality{=}1$
\begin{widetext}
\begin{align}
\pBohmReg_1&(x_1){=}
\tfrac{\iiint p'
	P(p',x') \exp
	\left({-}\frac{\left(x'{-}x_1\right)^2}{2
		{\varpi_{x_1}''}^2}{-}\frac{\left(p'{-}p\right
		)^2}{2
		{\varpi_{p_1}''}^2}\right)dp'dx'dp}{\iiint
	P(p',x') \exp
	\left({-}\frac{(x'{-}x)^2}{2
		{\varpi_{x_1}''}^2}{-}\frac{(p'{-}p
		)^2}{2 {\varpi_{p_1}''}^2}\right)dp'dx'dp}{=}
\tfrac{\Re\left[\int \psi(x')^*
		({-}i \hbar)\pder{\psi
			(x')}{x'}
		\exp\left({{-}\frac{(x'{-}x_1)^2}{2
				{\varpi_{x_1}''}^2}}\right) \, dx'\right]}{\int \psi
		(x')^* \psi (x')
		\exp\left({{-}\frac{(x'{-}x_1)^2}{2
				{\varpi_{x_1}''}^2}}\right) \, dx'}{=}
\tfrac{\Re[\matel{\psi}{\exp\left({{-}\frac{(\hat x_1{-}x_1)^2}{2
			{\varpi_{x_1}''}^2}}\right)\hat p_1}{\psi}]}{\matel{\psi}{\exp\left({{-}\frac{(\hat x_1{-}x_1)^2}{2
			{\varpi_{x_1}''}^2}}\right)}{\psi}}.\label{app04.-pBohmReg-1D}
\end{align}
Once $\ket{\psi}$ is represented by the anzatz of the squeezed coherent states 
\eqref{01.-cs-anzatz}
, Eq.~\eqref{app04.-pBohmReg-1D} can be evaluated analytically using the well-known technique described, e.g., in Ref.~\cite{2008-Shalashilin}.

The above derivation can be straightforwardly generalized to a multidimensional case
\begin{gather}\label{app04.-pBohmReg}
\pBohmReg_n(\ppExcl{p_n},\xx){=}\frac{\sum_{k_1,k_2{=}1}^{\basissize}
	\ac_{k_1}^{\ast}(t)\ac_{k_2}(t)
\matel{\pc_{k_1,n},\xc_{k_1,n}}{\{\hat p_n,e^{{-}\frac{(\hat x_n{-}x_n)^2}{2
			{\varpi_{x_1}''}^2}}\}_+}{\pc_{k_2,n},\xc_{k_2,n}}
\prod_{n'{\ne}n}\HF_{k_1,k_2,n'}''(p_{n'},x_{n'})}{2\sum_{k_1,k_2{=}1}^{\basissize}\ac_{k_1}^{\ast}(t)\ac_{k_2}(t)
\matel{\pc_{k_1,n},\xc_{k_1,n}}{e^{{-}\frac{(\hat x_n{-}x_n)^2}{2
			{\varpi_{x_1}''}^2}}}{\pc_{k_2,n},\xc_{k_2,n}}\prod_{n'{\ne}n}\HF_{k_1,k_2,n'}''(p_{n'},x_{n'})}.
\end{gather}
\end{widetext}

\subsection{Husimi fluxes}
The position flux components ${\GGBC_{x_n}}(\pp,\xx)$ can be readily evaluated by substituting Eqs.~\eqref{app04.-husimi_function_expansion}, \eqref{app04.-Q-term} and \eqref{app04.-Q-product_term} into Eq.~\eqref{12.-generalized_bohmian_position_flow}.

The computation of the momentum flux components
\begin{gather}\label{app04.-generalized_bohmian_momentum_flow}
{\GBC_{p_n}}(\pp,\xx){=}{\HC_{p_n}}(\pp,\xx){+}\delta\GBC_{p_n}(\pp,\xx),
\end{gather}
is more involving. (Recall that the ``default'' component $\HHC_{p_n}(\pp,\xx)$ of the momentum Husimi flow is defined by Eqs.~\eqref{app09.-J^H} and \eqref{app01.-J^W}). We are going to show how one can evaluate both the terms in Eq.~\eqref{app04.-generalized_bohmian_momentum_flow}. 

The gauge term $\delta{\GBC_{p_n}}(\pp,\xx)$ is given by the first of Eqs.~\eqref{11.-gauge_flows_H}. 
In order to evaluate the right hand side of Eqs.~\eqref{11.-gauge_flows_H}, one needs to know the gauge potential defined in 
Eq.~\eqref{12+.-gen_bohmian_gauge_potential}. One can proceed by expanding ${\GPGB_n}(\pp,\xx)$ analogously to Eqs.~\eqref{app04.-P&Q_expansions}
\begin{gather}
{\GPGB_n}(\pp,\xx){=}\sum_{k_1{=}1}^{\basissize}\sum_{k_2{=}1}^{\basissize}\ac_{k_1}^{\ast}\ac_{k_2}{\GPGB_{n,k_1,k_2}}(\pp,\xx),
\end{gather}
where
\begin{align}
{\GPGB_{n,k_1,k_2}}&(\pp,\xx){=}
\tfrac1{m_n}\big(\pBohmReg_n\linftyint{p_n}\HF_{k_1,k_2}''(\pp,\xx)\diff p_n{-}\notag\\
&\linftyint{p_n}\KernWD_{\wwW_{\pp}'',\wwW_{\xx}''}{\cdot}p_n\WF_{k_1,k_2}(\pp,\xx)\diff p\big).
\end{align}
The terms ${\GPGB_{n,k_1,k_2}}(\pp,\xx)$ can be evaluated via Eqs.~\eqref{app04.-int(pP(p))dp} and \eqref{app04.-int(Q)dp}
\begin{align}\label{app04.-A^H_term}
{\GPGB_{n,k_1,k_2}}(\pp,\xx)&{=}\tfrac1{m_n}\big((\pBohmReg_n{-}{\pcAvg{k_1}{k_2}{n}})\JBGone{k_1}{k_2}{n}(p_n{-}{\pcAvg{k_1}{k_2}{n}})+\notag\\&\tfrac{\hbar^2}{2\wc_n^2}\big)\HF_{k_1,k_2}''(\pp,\xx).
\end{align}
Substitution of Eq.~\eqref{app04.-A^H_term} into the first of Eqs.~\eqref{11.-gauge_flows_H} 
gives
\begin{gather}\label{app03.-position_gauge_flux_expansion}
{\delta\GBC_{p_n}}(\pp,\xx){=}\sum_{k_1{=}1}^{\basissize}\sum_{k_2{=}1}^{\basissize}\ac_{k_1}^{\ast}\ac_{k_2}{{\delta\GBC_{p_n,k_1,k_2}}}(\pp,\xx),
\end{gather}
where
\begin{align}
{\delta\GBC_{p_n,k_1,k_2}}&(\pp,\xx){=}\notag\\
&\tfrac1{m_n}\bigg(\tpder{\pBohmReg_n}{x_n}\JBGone{k_1}{k_2}{n}(p_n{-}{\pcAvg{k_1}{k_2}{n}})\HF_{k_1,k_2}''(\pp,\xx){+}\notag\\
&\big((\pBohmReg_n{-}{\pcAvg{k_1}{k_2}{n}})\JBGone{k_1}{k_2}{n}(p_n{-}{\pcAvg{k_1}{k_2}{n}})+\notag\\
&\tfrac{\hbar^2}{2\wc_n^2}\big)\tpder{\HF_{k_1,k_2}''(\pp,\xx)}{x_n}\bigg).
\end{align}

Let us now turn to computing the first term ${\HHC_{p_n}}(\pp,\xx)$ in Eq.~\eqref{app04.-generalized_bohmian_momentum_flow}. We will restrict ourselves to the case of the Hamiltonians specified in 
Eq.~\eqref{02'.-Hamiltonian} with the separable potential 
\begin{gather}\label{app04.-separable_potential}
V(\xx){=}\sum_nV_n(x_n),
\end{gather}
as this case allows for the exact analytical treatment.

We again start by expanding the momentum flow $\HHC_{\pp}$ defined in Eqs.~\eqref{app09.-J^H} and \eqref{app01.-J^W} similarly to Eq.~\eqref{app03.-position_gauge_flux_expansion}
\begin{gather}\label{app03.-position_flux_expansion}
{\HC_{p_n}}(\pp,\xx){=}\sum_{k_1{=}1}^{\basissize}\sum_{k_2{=}1}^{\basissize}\ac_{k_1}^{\ast}\ac_{k_2}\HC_{p_n,k_1,k_2}(\pp,\xx),
\end{gather}
where
\begin{gather}
\HC_{p_n,k_1,k_2}(\pp,\xx){=}\HC_{p_n,k_1,k_2,n}(p_n,x_n)\prod_{n_1{\ne}n}\HF_{k_1,k_2,n_1}''(p_{n_1},x_{n_1}),
\end{gather}
and
\begin{align}\label{app04.-J^H_{p_n,xi}}
\HC_{p_n,k_1,k_2,n}&(p_n,x_n){=}\KernWD_{\wW_{p_n}'',\wW_{x_n}''}{\cdot}\WC_{p_n,k_1,k_2,n}(p_n,x_n),
\\
\WC_{p_n,k_1,k_2,n}&(p_n,x_n){=}\notag\\
{-}\WF_{n,k_1,k_2}&(p_n,x_n)\sinc\big(\tfrac{\hbar}2(-\pderl{p_n}{\cdot}\pderr{x_n})\big)\tpder{V_n(x_n)}{x_n}.\label{app04.-J^W_{p_n,xi}}
\end{align}
To simplify notations, hereafter we will use the composite index $\xi{=}\{k_1,k_2,n\}$, i.e., $\HC_{p_n,\xi}{\equiv}\HC_{p_n,k_1,k_2,n}$ etc.
Following the discussion in Sec.~\ref{@APP:01.-key_concepts}, we can reexpress Eq.~\eqref{app04.-J^W_{p_n,xi}} in terms of the Blokhintsev function $\tilde{\WF_{\xi}}(\lambda,x_n){=}\IFT{p_n}{\lambda}[\WF_{\xi}(p_n,x_n)]$ by applying the Fourier transform \eqref{app01.-Fourier_transform_operator}
\begin{gather}
\WC_{p_n,\xi}(p_n,x_n){=}{-}\FT{\lambda}{p_n}{\cdot}\bigg(\tilde{\WF}_{\xi}(\lambda,x_n)\sinc\big(\tfrac{i}2(\lambda{\cdot}\tpder{}{x_n})\big)\tpder{V_n(x_n)}{x_n}\bigg){=}\notag\\
2i\FT{\lambda}{p_n}{\cdot}\bigg(\frac{1}{\lambda}\tilde{\WF}_{\xi}(\lambda,x_n)\sin\big(\tfrac{i}2(\lambda{\cdot}\tpder{}{x_n})\big)V_n(x_n)\bigg){=}\notag\\
\FT{\lambda}{p_n}{\cdot}\bigg(\tfrac{1}{\lambda}\tilde{\WF}_{\xi}(\lambda,x_n)(V_n(x_n{+}\tfrac{\lambda}2)-V_n(x_n{-}\tfrac{\lambda}2))\bigg).
\end{gather}
The respective expression \eqref{app04.-J^H_{p_n,xi}} for $\HC_{p_n,\xi}(p_n,x_n)$ can be simplified with the help of convolution theorem
\begin{align}\label{HC_{p_n,k_1,k_2,n}-prettifying}
\HC_{p_n,\xi}&(p_n,x_n){=}\inftyint\FT{\lambda}{p_n}{\cdot}\bigg(\etilde{\KernW}_{\wW_{p_n}'',\wW_{x_n}''}(\lambda,x_n{-}x')\notag\\
&\tfrac{1}{\lambda}\tilde{\WF}_{\xi}(\lambda,x')(V_n(x'{+}\tfrac{\lambda}2)-V_n(x'{-}\tfrac{\lambda}2))\bigg)\diff x',
\end{align}
where \begin{align}
\etilde{\KernW}_{\wW_{p_n}'',\wW_{x_n}''}(\lambda,\delta x)&{=}\IFT{p'}{\lambda}[{\KernW}_{\wW_{p_n}'',\wW_{x_n}''}(p_n{-}p',\delta x)]{=}\notag\\
\tfrac{1}{2 \pi  \sqrt{\hbar } {\varpi_{x_n}''}}&\exp \left({-}\tfrac{\text{$\delta $x}^2}{2
		{\varpi_{x_n}''}^2}{-}\tfrac{\lambda  \left(\lambda 
		{\varpi_{p_n}''}^2{+}2 i p_n \hbar \right)}{2 \hbar
		^2}\right)
\end{align}
is the partially Fourier-transformed kernel $\KernW_{\wW_{p_n}'',\wW_{x_n}''}$ (see Eq.~\eqref{10.-Husimi_kernel}
).

Recall that the squeezed coherent states have the form
\begin{align}\label{app04.-CS_definition'}
\scpr{\xx}{\ppc_k,\xxc_k}&{=}\prod_{n=1}^{\dimensionality}\chi_{k,n}(x_n),\notag\\ \chi_{k,n}(x_n){=}&\prod_{n{=}1}^{N}\tfrac1{\sqrt{\pi}\wc_n^{\frac14}}e^{-\frac{(x_n{-}\xc_{k,n})^2}{2\wc_n^2}{+}\frac i{\hbar}\pc_{k,n}(x_n{-}\xc_{k,n})}.
\end{align}
Using this definition and with the help of the substitutions $y{=}x{-}\frac{\lambda}2$, $z{=}x{+}\frac{\lambda}2$, Eq.~\eqref{HC_{p_n,k_1,k_2,n}-prettifying} can be reshaped as
\begin{gather}
\HC_{p_n,\xi}(p_n,x_n){=}\tfrac{I^{(0)}_{k_1,k_2,n}(p_n,x_n){+}(I^{(0)}_{k_2,k_1,n}(p_n,x_n))^{*}}{\sqrt{2\pi\hbar}},
\end{gather}
where 
\begin{gather}\label{app04-integral_I^(0)}
I^{(0)}_{\xi}(p_n,x_n){=}\inftyint\chi_{k_1,n}(y)V_n(y)I^{(1)}_{k_2,n}(y,p_n,x_n)\diff y,
\end{gather}
and
\begin{gather}
I^{(1)}_{k_2,n}(y,p_n,x_n){=}\inftyint \tfrac{\etilde{\KernW}_{\wW_{p_n}'',\wW_{x_n}''}(z{-}y,\frac{z+y}2{-}x_n)}{z{-}y}\diff z.
\end{gather}
The last integral is of a general form
\begin{gather}
C_0\inftyint\tfrac{e^{-A_0 \lambda^2+B_0 \lambda}}{\lambda}\diff \lambda {=}\pi C_0\erfi\left(\tfrac {B_0}{\sqrt 2 A_0}\right),
\end{gather}
where, in our case,
\begin{align*}
C_0{=}&\frac{C_1}{\pi}e^{-A_1 y^2+B_1y},~~
A_1{=}\tfrac{1}{2}
\left(\tfrac{1}{\wc_n^2}{+}\tfrac{1}{{\varpi_{x_
			n}''}^2}\right),\\
B_1{=}&\tfrac{i \pc_{k_2,n}}{\hbar
}{+}\tfrac{\xc_{k_2,n}}{\wc_n^2}{+}\tfrac{x_n}{{
		\varpi_{x_n}''}^2},\\
C_1{=}&\tfrac{\exp \left({-}\tfrac{x_n^2}{2
		{\varpi_{x_n}''}^2}{-}\tfrac{1}{2} \xc_{k_2,n}
	\left(\tfrac{\xc_{k_2,n}}{\wc_n^2}{+}\tfrac{2 i
		\pc_{k_2,n}}{\hbar }\right)\right)}{2 \pi
	^{1/4} {\varpi_{x_n}''} \sqrt{\hbar  \wc_n}},\\
A_0{=}&\tfrac{1}{8} \left(\tfrac{4}{\wc_n^2}{+}\tfrac{4
	{\varpi_{p_n}''}^2}{\hbar
	^2}{+}\tfrac{1}{{\varpi_{x_n}''}^2}\right),~~
B_0{=}B_{0,1}y+B_{0,2},\\
B_{0,1}{=}&{-}\tfrac{1}{\wc_n^2}{-}\tfrac{1}{2
	{\varpi_{x_n}''}^2},~~
B_{0,2}{=}{-}\tfrac{i \left(p_n{-}\pc_{k_2,n}\right)}{\hbar
}{+}\tfrac{\xc_{k_2,n}}{\wc_n^2}{+}\tfrac{x_n}{2
	{\varpi_{x_n}''}^2}.
\end{align*}
Thus,
\begin{gather}
I^{(1)}_{k_2,n}(y,p_n,x_n){=}C_1e^{-A_1 y^2+B_1y}\erfi(B_2 y+C_2),
\end{gather}
where
$B_2{=}\frac{B_{1,1}}{\sqrt 2A_0}$, $C_2{=}\frac{B_{1,2}}{\sqrt 2A_0}$ and $\erfi(z){=}\tfrac{\erf(iz)}i$.

Assume that the potential $V_n(y)$ can be cast as a sum
\begin{gather}
V_n(y){=}\sum_k v_{n,k},
\end{gather}
where each term $v_{n,k}$ has the following generic form
\begin{gather}\label{app04-potential_term}
v_{n,k}(y){=}c_{r(k)} y^{r(k)} e^{v_2(k) y^2{+}v_1(k) y}
\end{gather}
with non-negative integers $r$ and some constants $c_r$, $v_1$ and $v_2$. In this case, the integral \eqref{app04-integral_I^(0)} is expandable as
\begin{gather}
I^{(0)}_{\xi}(p_n,x_n){=}\sum_k c_{r(k)}I^{(0)}_{\xi,k}(p_n,x_n),
\end{gather}
where each $I^{(0)}_{\xi,k}(p_n,x_n)$ is of the form
\begin{gather}
I^{(0)}_{\xi,k}(p_n,x_n){=}\inftyint\chi_{k_1,n}(y)y^{r} e^{v_2 y^2{+}v_1 y}I^{(1)}_{k_2,n}(y,p_n,x_n)\diff y{=}\notag\\C_4\inftyint y^r e^{-\alpha y^2{+}\beta y}\erfi(B_2y+C_2)\diff y{=}\notag\\
-iC_4\sqrt{2 \pi } (2 \alpha )^{{-}\frac{n{+}1}{2}
	} I_r^{(2)}\left(i\tfrac{B_2}{\sqrt{2 \alpha
}},iC_2,\tfrac{\beta }{\sqrt{2 \alpha }}\right).
\end{gather}
Here
\begin{align*}
\alpha{=}&\tfrac{1}{2 \wc_n^2}{+}A_1{+}v_2,~~\beta{=}{-}\tfrac{i \pc_{k_1,n}}{\hbar
}{+}\tfrac{\xc_{k_1,n}}{\wc_n^2}{+}B_1{+}v_1,\\
C_4{=}&\tfrac{C_1 e^{\xc_{k_1,n}
		\left({-}\frac{\xc_{k_1,n}}{2
			\wc_n^2}{+}\frac{i \pc_{k_1,n}}{\hbar
		}\right)}}{\sqrt[4]{\pi } \sqrt{\wc_n}}
\end{align*}
and
\begin{gather}
	I_r^{(2)}(a,b,\mu){=}\tfrac{1}{\sqrt{2 \pi }}\inftyint x^r e^{{-}\frac{x^2}{2}{+}\mu x} \erf(ax{+}b)\diff x{=}\notag\\
	P_r(\mu)
	e^{\frac{\mu ^2}{2}} \text{erf}\left(\tfrac{a \mu
		{+}b}{\sqrt{2 a^2{+}1}}\right)
	{+} Q_r(\mu)
	e^{\frac{\mu ^2}{2}{-}\frac{(a \mu {+}b)^2}{2
			a^2{+}1}}
	,
\end{gather}
where the polynomials $P_r(\mu)$ and $Q_r(\mu)$ are defined recursively as
\begin{align}
P_0(\mu){=}&1,~~Q_0(\mu){=}0,\notag\\
P_{r+1}(\mu){=}&\tpder{P_r(\mu )}{\mu}{+}\mu P_r(\mu),\notag\\
Q_{r+1}(\mu){=}&\tfrac{(\mu {-}2 a b) Q_r(\mu )}{2
	a^2{+}1}{+}\tfrac{2 a P_r(\mu )}{\sqrt{\pi(2 
		a^2{+}1)}}{+}\tpder{Q_r(\mu )}{\mu}.\notag
\end{align}
\subsection{Remark on multidimensional computations}
The outlined procedures for computing the components of $\ppBohmReg$, ${\HHC}_{\xx}''$ and ${\delta\HHC}_{\pp}''$ are applicable to both one- and multidimensional systems. The numerical complexities scale linearly with the number of dimensions. However, the described algorithm for computing the ``default'' Husimi momentum flow (given by the first of Eqs.~\eqref{app09.-J^H}, where $\WWC_{\pp}$ is defined by Eq.~\eqref{app01.-J^W}) is effective only for one-dimensional systems, and its direct extension to multidimensional systems (apart from the special cases of harmonic and separable potentials \eqref{app04.-separable_potential}) is prohibitively computationally expensive. Here we demonstrate the feasibility of the approximation
\begin{align}
\HHC_{\pp}(\pp,\xx)&{\simeq}{-}\Re\bigg[\KernWD_{\wwW_{\pp}'',\wwW_{\xx}''}\bigg(\tfrac1{(2\pi\hbar)^{\dimensionality}}\inftyints\diff\llambda^{\dimensionality}\times\notag\\
&\scpr{\psi}{\xx{-}\tfrac{\llambda}2}\matel{{\xx{+}\tfrac{\llambda}2}}{\tpder{\hat H}{\hat x}}{\psi}e^{-i\frac{\pp\cdot\llambda}{\hbar}}\bigg)\bigg],\label{app07.-J^H_p-thermal_approximation}
\end{align}
applicable to the case of a pure quantum state $\ket{\psi}$. Note that Eq.~\eqref{app07.-J^H_p-thermal_approximation} reduces to 
\begin{gather}\label{app07.-J^H_p-non-thermal_approximation}
\HHC_{\pp}(\pp,\xx){\simeq}{-}\Re[\matel{\pp,\xx}{\tpder{\hat H}{\hat x}}{\psi}\scpr{\psi}{\pp,\xx}],
\end{gather}
in the case of the conventional ``non-thermal'' Husimi representation defined in 
Eq.~\eqref{10.-HF_definition}. The procedures for computing right hand side of Eq.~\eqref{app07.-J^H_p-non-thermal_approximation} in a multidimensional case are well-developed \cite{2008-Shalashilin}. Similarly, Eq.~\eqref{app07.-J^H_p-thermal_approximation} can be analytically evaluated for generic operators $\hat O{=}{-}\tpder{\hat H}{\hat x}$ of the form
\begin{gather}
\hat O{=}\sum_{k,s,r}a_{k,s,r}\prod_{n{=}1}^{\dimensionality}\hat p^s_n\hat x^r_n e^{c_{2,k,n} \hat x_n^2{+}c_{1,k,n} \hat x_n},
\end{gather}
where $a_{k,s,r}, c_{1,k,n}{\in}\complexes$ and $c_{2,k,n}{\in}\reals$ are some coefficients. Using the expansion similar to \eqref{app03.-position_flux_expansion}, the only non-trivial part of evaluating Eq.~\eqref{app07.-J^H_p-thermal_approximation} is the calculation of an one-dimensional integrals of the form
\begin{align}
\KernWD&_{\wW_{p}'',\wW_{x}''}\big(\tfrac1{2\pi\hbar}\inftyint\diff \lambda e^{-i\frac{p\lambda}{\hbar}}\times\notag\\
&\scpr{\pc_1,\xc_1}{x{-}\tfrac{\lambda}2}\matel{x{+}\tfrac{\lambda}2}{\hat p^s\hat x^r e^{c_2 \hat x^2{+}c_1 \hat x}}{\pc_2,\xc_2}\big).\label{app07.-int1}
\end{align}
In the following, it will be helpful to explicitly introduce the width parameter into the notation \eqref{app04.-CS_definition'} for the squeezed coherent states
\begin{gather}\label{app07.-exp*scs_relation}
\ket{\ppc,\xxc}{\equiv}\ket{\ppc,\xxc,\wwc}
\end{gather}
One can check that the following relation holds:
\begin{gather}
e^{c_2 \hat x^2{+}c_1 \hat x}\ket{\pc,\xc,\wc}{\equiv}c'\ket{\pc',\xc',\wc'},
\end{gather}
where
\begin{gather*}
	c'{=}\tfrac{\exp\left({{-}\frac{\frac{\wc^2 \left(2 i \pc \left(2 c_2 \xc{+}\Re[c_1]\right){+}\Re[c_1] \hbar  (c_1{+}i \Im[c_1])\right)}{\hbar }{+}2 \xc \left(c_2 \xc{+}c_1\right)}{2 \left(2 c_2 \wc^2{-}1\right)}}\right)}{\sqrt[4]{1{-}2 c_2 \wc^2}},\\
	\pc'{=}\pc{+}\Im[c_1] \hbar,~~\xc'{=}\tfrac{\Re[c_1]
	\wc^2{+}\xc}{1{-}2 c_2 \wc^2},~~
	\wc'{=}\tfrac{\wc}{\sqrt{{1}{-}{2c_2\wc^2}}}.
\end{gather*}
Using the relation \eqref{app07.-exp*scs_relation}, we can reduce the integrals in Eq.~\eqref{app07.-int1} to the evaluation of the following generic term:
\begin{align}
\KernWD&_{\wW_{p}'',\wW_{x}''}\big(\tfrac1{2\pi\hbar}\inftyint\diff \lambda e^{-i\frac{p\lambda}{\hbar}}\times\notag\\
&\scpr{\pc_1,\xc_1,\wc_1}{x{-}\tfrac{\lambda}2}\matel{x{+}\tfrac{\lambda}2}{\hat p^s\hat x^r}{\pc_2,\xc_2,\wc_2}\big).\label{app07.-int2}
\end{align}
Further simplifications can be done by using the equality
\begin{gather}
\hat p\ket{\pc,\xc,\wc}{=}\left(i\tfrac{\hbar}{\wc^2}(x{-}\xc){+}{ \pc}\right)\ket{\pc,\xc,\wc}
\end{gather}
and the canonical commutation relations between position and momentum operators, which allow to rewrite Eq.~\eqref{app07.-int2} as the sum of terms like
\begin{align}\label{app07.-int3}
I^{(4)}&{=}\KernWD_{\wW_{p}'',\wW_{x}''}\big(\tfrac1{2\pi\hbar}\inftyint\diff \lambda e^{-i\frac{p\lambda}{\hbar}}\times\notag\\
&\scpr{\pc_1,\xc_1,\wc_1}{x{-}\tfrac{\lambda}2}\matel{x{+}\tfrac{\lambda}2}{\hat x^r}{\pc_2,\xc_2,\wc_2}\big).
\end{align}
The latter allow for analytical integration
\begin{align}
I^{(4)}{=}&\tfrac{1}{2 \pi  \sqrt{F\wc_1 \wc_2} {\varpi_{p_n}''}
	{\varpi_{x_n}''}}C^rH_r\left(\tfrac{Z(p,x)}{2C}\right)\times\notag\\
&e^{A (X{-}\mathit{x})^2{+}B(P{-}\mathit{p})^2{+}D{+}C(P{-}p) (X{-}x)},
\end{align}
where $H_r$ denote Hermite polynomials, 
\begin{gather}
Z(p,x){=}\tfrac{\left(\wc_1^2 (2 A (X{-}x){+}C (P{-}p)){+}i \hbar  (2 B (p{-}P){+}C (x{-}X))\right)}{\wc_1^2/\wc_2^2{+}1}{+}X
\end{gather}
and the coefficients are
\begin{align*}
E{=}&\tfrac{1}{2} \left(\tfrac{1}{\wc_1^2}{+}\tfrac{1}{\wc_2^2}{+}\tfrac{1}{{\wW_{x}''}^2}\right),~~
F{=}\tfrac{\hbar ^2 \left(\wc_1^2{+}\wc_2^2{+}4 {\wW_{x}''}^2\right)}{8 \wc_1^2 \wc_2^2 {\wW_{p}''}^2 {\wW_{x}''}^2}{+}E,\\
A{=}&{-}\tfrac{\left(\wc_1^2{+}\wc_2^2\right) {\wW_{p}''}^2{+}\hbar ^2}{4 F \wc_1^2 \wc_2^2 {\wW_{p}''}^2 {\wW_{x}''}^2},~~
B{=}{-}\tfrac{E}{2 F {\wW_{p}''}^2},\\
C{=}&{-}\tfrac{i \hbar  \left(\frac{1}{\wc_1^2}{-}\frac{1}{\wc_2^2}\right)}{4 F {\varpi_{p_n}''}^2 {\varpi_{x_n}''}^2},~~
X{=}\tfrac{\pc_{1} \wc_1^2{+}\pc_{2} \wc_2^2{+}i \hbar 
	\left(\xc_{1}{-}\xc_{2}\right)}{\wc_1^2{+}\wc_2^2},\\
P{=}&\tfrac{\hbar  \xc_{2} \wc_1^2{+}\wc_2^2 \left(\hbar  \xc_{1}{-}i
	\left(\pc_{1}{-}\pc_{2}\right) \wc_1^2\right)}{\hbar 
	\left(\wc_1^2{+}\wc_2^2\right)},\\
D{=}&\tfrac{i}{2\hbar }\left(\pc_{1} \xc_{1}{-}\pc_{2} \xc_{2}{+}P 
	\left(\xc_{1}{-}\xc_{2}\right){-}X \left(\pc_{1}{-}\pc_{2}\right)\right).
\end{align*}

Note that Eq.~\eqref{app07.-J^H_p-non-thermal_approximation} is exact for the case of a harmonic system. In the case of non-harmonic potentials, it can be regarded as a local harmonic approximation of the potential energy surface near the phase space point $\{\pp,\xx\}$. However, Fig.~\ref{@FIG:W04} demonstrates that this approximation remains viable even for manifestly anharmonic, rapidly varying potentials. We consider again the model from Fig.~\ref{@FIG:W02}
. We reproduce the solution computed using the exact regularized Bohmian gauge (blue curves) and repeat the same calculation using the approximate expression \eqref{app07.-J^H_p-non-thermal_approximation} (red curves). One can see that the accuracies of both the methods are nearly identical up to $t{\simeq}7.5$; after that the exact and approximate solutions rapidly diverge due to an insufficient basis size.
\begin{figure}[htp]
\centering
\includegraphics[width=0.5\textwidth]{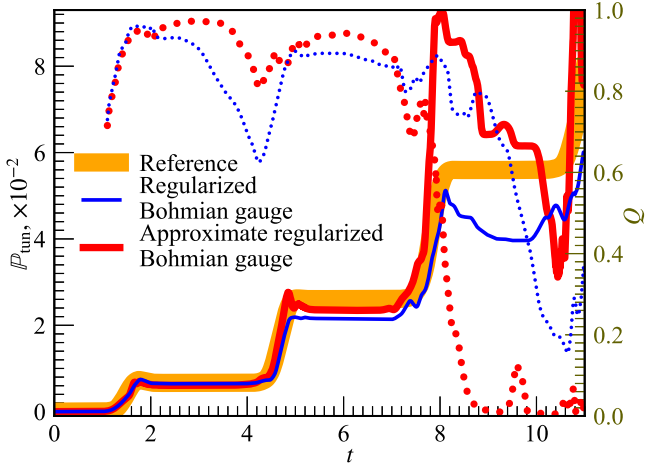}
\caption{The performance of the regularized-Bohmian-gauge-based numerical solution to the Schr\"odinger equation, where the approximation \eqref{app07.-J^H_p-non-thermal_approximation} to Husimi fluxes \eqref{app09.-J^H} is employed (red curves). The same test problem as in Fig.~\ref{@FIG:W02} 
is considered. For comparison, the respective calculations from Fig.~\ref{@FIG:W02} are reproduced (blue curves), where no approximations to the right hand side of Eqs.~\eqref{app09.-J^H} are made. The exact solution is also shown in orange as a reference. The remaining notations, parameters, and computational basis are detailed in the caption to Fig.~\ref{@FIG:W02} 
and in \APPREF{@APP:06}.\label{@FIG:W04}}
\end{figure}

\section{Numerical example: Computational details\label{@APP:06}}
\begin{table}[tbh]
	\begin{tabular}{rS[table-format=2.8]S[table-format=2.7]rS[table-format=3.7]}
		\hline
		\multicolumn{1}{r}{$k$} &  \multicolumn{1}{c}{$\ac_k$}  &  \multicolumn{1}{c}{$\mathrm{arg}(\ac_k)$} & \multicolumn{1}{c}{$\pc_k$} & \multicolumn{1}{c}{$\xc_k$}\\
		\hline
		0 & 0.424935  & -1.52358   &  -2 & -2.05    \\
		1 & 0.160277  & -0.27583   &  -2 & -1.69    \\
		2 & 0.16441   & -0.232555  &   0 & -1.93    \\
		3 & 0.326508  &  0.488428  &   0 & -1.57    \\
		4 & 0.227026  &  2.76771   &   2 & -1.81    \\
		5 & 0.0897529 &  1.74287   &   2 & -1.45    \\
		6 & 0.363191  &  2.77319   &  -2 & -2.35    \\
		7 & 0.426131  & -0.369591  &   0 & -2.25    \\
		8 & 0.268861  &  0.0324824 &   2 & -2.15    \\
		9 & 0.151384  & -2.35028   &  -2 & -6.08333 \\
		10 & 0.201003  &  2.18152   &  -2 & -5.08333 \\
		11 & 1.09911   & -0.959581  &  -2 & -4.08333 \\
		12 & 0.159998  &  2.25803   &  -2 & -3.08333 \\
		13 & 1.09394   &  0.37624   &   0 & -5.75    \\
		14 & 3.92821   & -0.214463  &   0 & -4.75    \\
		15 & 6.64557   &  0.282041  &   0 & -3.75    \\
		16 & 1.75203   & -0.368937  &   0 & -2.75    \\
		17 & 0.391672  & -0.246873  &   2 & -5.41667 \\
		18 & 0.0519809 &  1.89087   &   2 & -4.41667 \\
		19 & 0.925349  & -1.1339    &   2 & -3.41667 \\
		20 & 0.593327  &  2.0164    &   2 & -2.41667 \\
		\hline
	\end{tabular}
	\caption{The initial expansion coefficients for representing the wavefunction $\psi(t{=}0)$ in the anzatz of 21 squeezed coherent states.\label{@TBL:num_scs_basis}}
\end{table}
Here we detail the procedure for solving the time-dependent Schrodinger equation
\begin{gather}\label{app06.-Schrodinger_equation}
\ket{r}{=}\tpder{}t\ket{\psi}{-}\tfrac1{i\hbar}\hat H\ket{\psi}{=}0.
\end{gather}
within the anzatz of time-dependent squeezed coherent states 
\eqref{01.-cs-anzatz}
, which is used to obtain the results presented in Fig.~\ref{@FIG:W02} 
and Fig.~\ref{@FIG:W04}. The problem of propagating $\ket{\psi(t)}$ reduces to determining $\zzcCount$ time-dependent parameters in the anzatz \eqref{01.-cs-anzatz}, such as amplitudes $\ac_k(t)$, for which we do not provide analytical expressions. Let us collectively denote these parameters as $\zzc$. Then, the time derivative $\pder{}t\ket{\psi}$ in \eqref{app06.-Schrodinger_equation} can be written as
\begin{gather}\label{app06.-d_psi/d_t}
\tpder{}t\ket{\psi}{=}\sum_{l{=}1}^{\zzcCount}\zct_k\ket{\psi_{\zc_k}}{+}\ket{\psi_{t}},
\end{gather}
where the shorthand notations $\zct_k{=}\tder{\zc_k}t$,  $\ket{\psi_{\zc_k}}{=}\tpder{}{\zc_k}\ket{\psi}$ are used, and $\ket{\psi_{t}}{=}\tpder{}{t}\ket{\psi}$ denotes the partial time derivative over all parameters other than $\zzc(t)$. Similarly to the CCS approach \cite{2008-Shalashilin}, we determine $\zct_k$ as the coefficients minimizing the regularized least-square approximation error
\begin{gather}\label{app06.-Schrodinger_extermal_problem}
\scpr{r}{r}+\epsilon\zzct^{\intercal}D\zzct{\to}\left.\min\right|_{\zzct},
\end{gather}
where $D$ is the real diagonal $\zzcCount{\times}\zzcCount$ regularization matrix (chosen here to be equal to the identity matrix) and $\epsilon{>}0$ is an empirically chosen small parameter.
Assuming that all of the parameters $\zzc$ are real%
\footnote{Complex parameters can be included, e.g, by representing them in the polar form and treating the amplitude and phase as two new real parameters.}
and using the notations~\eqref{app06.-d_psi/d_t}, the solution of extremal problem \eqref{app06.-Schrodinger_extermal_problem} can be written as
\begin{gather}
\zzct{=}A^{-1}\bb,
\end{gather}
where $A$ and $\bb$ are the $\zzcCount{\times}\zzcCount$ real matrix and the $\zzcCount$-dimensional real vector with entries
\begin{align}
A_{i,j}{=}&\Re[\scpr{\psi_{\zc_i}}{\psi_{\zc_j}}]+\epsilon D_{i,j},\\
b_j{=}&\Re[\tfrac{-i}{\hbar}\matel{\psi_{\zc_j}}{\hat H}{\psi}{-}\scpr{\psi_{\zc_j}}{\psi_t}].
\end{align}
The role of the regularization matrix $D$ is twofold. First, it guarantees that $A>0$ and, hence, is invertible for any $\epsilon{>}0$. Second, it introduces a bias for solutions minimizing norms $|\zct_{k}|$ in degenerate cases. The latter feature is especially useful when $\zzc$ are slow-varying parameters. The empirical constant $\epsilon$ should be chosen large enough to provide an effective regularization, but small enough not to influence the computational accuracy.

The specific definitions of parameters $\zzc$ are identical for both the CCS and phase-space-flows-based computations. Namely, $\zzc{=}\{\bc_1,...,\bc_{\dimensionality},\jc_1,...,\jc_{\dimensionality}\}$, where real parameters $\bc_k$ and $\jc_k$ are the amplitude and ``slow'' phase of the complex amplitudes $\ac_k$:
\begin{gather}
\ac_k(t){=}\bc_k(t)e^{i(\jc_k(t){+}\jFastc_k(t))}.
\end{gather}
Here the ``fast'' phase $\jFastc_k(t)$ is defined following the standard practice (see, e.g., Ref.~\cite{2017-Makhov}) by initial condition
$\jFastc_k(0){=}0$ and the evolution equation $\tpder{}{t}\jFastc_k(t){=}\tfrac1{\hbar}(\ppc_k{\cdot}{\dot{\xxc}_k}{-}\matel{\ppc_{k},\xxc_{k}}{\hat H}{\ppc_{k},\xxc_{k}})$.


In all the simulations (except for CCS) we utilized the generalized (''thermal'') Husimi transform by modifying Eq.~\eqref{10.-HF_definition} as
\begin{gather}\label{app06.-thermal_HF_definition}
\HF(\pp,\xx){=}\KernWD_{\alpha\frac{\hbar}{\sqrt2\wwc},\alpha\frac{\wwc}{\sqrt2}}{\cdot}\WF(\pp,\xx)
\end{gather}
using the scaling factor $\alpha{=}1.1$.
%
%
The initial wavefunction $\psi(t{=}0)$ was expanded to 21 basis squeezed coherent states with $\wc_k{=}\frac12$. The expansion coefficients at $t{=}0$ are summarized in Table~\ref{@TBL:num_scs_basis}. 

The source code used to produce all the numerical examples can be found in Ref.~\cite{SOURCE_CODE}.

\onecolumngrid
\bibliography{generalized_quantum_hydrodynamics}

\end{document}